\newcommand{\lv}[1]{#1}
\newcommand{\sv}[1]{}
\newtheorem{CLM}{Claim}
\theoremstyle{theorem}
\newtheorem{proposition}[theorem]{Proposition}
\newtheorem{ourfact}[theorem]{Fact}
\newtheorem{redrule}{Reduction Rule}
 \newtheorem{observation}{Observation}
\newcommand{\fes}{\text{\normalfont{\bfseries fes}}}
\newcommand{\tcw}{{\mathbf{tcw}}}
\newcommand{\adh}{{\mathbf{adh}}}
\newcommand{\tor}{{\mathbf{tor}}}
\newcommand{\PP}{\mathcal{P}}
\newcommand{\RR}{\mathcal{P}}
\newcommand{\QQ}{\mathcal{Q}}
\newcommand{\SSS}{\mathcal{S}}
\newcommand{\LLL}{\mathcal{L}}
\newcommand{\Nat}{\mathbb{N}}
\newcommand{\bigoh}{\mathcal{O}}
\newcommand{\cc}[1]{{\mbox{\textnormal{\textsf{#1}}}}\xspace}
\newcommand{\NP}{\cc{NP}}
\newcommand{\FPT}{\cc{FPT}}
\newcommand{\XP}{\cc{XP}}
\newcommand{\Weft}{{\cc{W}}}
\newcommand{\W}[1]{{\Weft}{\normalfont{[#1]}}}
\newcommand{\EDP}{\textsc{EDP}}
\newcommand{\VDP}{\textsc{VDP}}
\newcommand{\hy}{\hbox{-}\nobreak\hskip0pt}
\newcommand{\SB}{\{\,}%
\newcommand{\SM}{\;{|}\;}%
\newcommand{\SE}{\,\}}%
\newcommand{\pbDef}[3]{%
\noindent
\begin{center}
\begin{boxedminipage}{0.98 \columnwidth}
#1\\[5pt]
\begin{tabular}{l p{0.70 \columnwidth}}
Input: & #2\\
Question: & #3
\end{tabular}
\end{boxedminipage}
\end{center}
}
\newcommand{\pbDefP}[4]{%
\noindent
\begin{center}
\begin{boxedminipage}{0.98 \columnwidth}
#1\\[5pt]
\begin{tabular}{l p{0.70 \columnwidth}}
Input: & #2\\
Parameter: & #3\\
Question: & #4
\end{tabular}
\end{boxedminipage}
\end{center}
}
\author{Robert Ganian}{Algorithms and Complexity group, TU Wien, Vienna, Austria}{ganian@ac.tuwien.ac.at}{}{}
\author{Sebastian Ordyniak}{Algorithms group, University of Sheffield, Sheffield, UK}{sordyniak@gmail.com}{}{}
\begin{document}


\authorrunning{R. Ganian, S. Ordyniak}
\subjclass{F.2 Analysis of Algorithms and Problem Complexity, G.2.1 Combinatorics}
\keywords{edge disjoint path problem, feedback edge set, 
  treecut width, parameterized complexity}

\maketitle
\date{ }

\begin{abstract}
This paper revisits the classical Edge Disjoint Paths (EDP) problem,
  where one is given an undirected graph $G$ and a set of terminal
  pairs $P$ and asks whether $G$ contains a set of pairwise
  edge-disjoint paths connecting every terminal pair in $P$. 
Our aim is to identify structural properties (parameters) of graphs which
allow the efficient solution of EDP without restricting the placement of terminals in $P$ in any way. In this setting, EDP is known to remain NP-hard even on extremely restricted graph classes, such as graphs with a vertex cover of size $3$.

We present three results which use edge-separator based parameters to chart new islands of tractability in the complexity landscape of EDP. Our first and main result utilizes the fairly recent structural parameter treecut width (a parameter with fundamental ties to graph immersions and graph cuts): we obtain a polynomial-time algorithm for EDP on every graph class of bounded treecut width. Our second result shows that EDP parameterized by treecut width is unlikely to be fixed-parameter tractable. Our final, third result is a polynomial kernel for EDP parameterized by the size of a minimum feedback edge set in the graph.
\end{abstract}

\section{Introduction}
\label{sec:intro}
{\sc Edge Disjoint Paths} ({\sc EDP}) is a fundamental routing graph problem: we are given a graph $G$ and a set $P$ containing pairs of vertices (\emph{terminals}), and are asked to decide whether there is a set of $|P|$ pairwise edge disjoint paths in $G$ connecting each pair in $P$.
Similarly to its counterpart, the \textsc{Vertex Disjoint Paths} (\textsc{VDP}) problem, \textsc{EDP} has been at the center of numerous results in structural
graph theory, approximation algorithms, and parameterized
algorithms~\cite{RobertsonS95b,KawarabayashiKK14,ChekuriKS06,KolliopoulosS04,EneMPR16,ZhouTN00,NishizekiVZ01,GargVY97,FleszarMS16}.

Both \textsc{EDP} and \textsc{VDP} are \NP-complete in general~\cite{Karp75}, and a significant amount of research has focused on identifying structural properties which make these problems tractable.
For instance, Robertson and Seymour's seminal work in the Graph Minors project~\cite{RobertsonS95b} provides an $\bigoh(n^3)$ time algorithm for both problems for every fixed value of $|P|$. Such results are often viewed through the more refined lens of the \emph{parameterized complexity} paradigm~\cite{DowneyFellows13,CyganFKLMPPS14}; there, each problem is associated with a numerical parameter $k$ (capturing some structural property of the instance), and the goal is to obtain algorithms which are efficient when the parameter is small. Ideally, the aim is then to obtain so-called \emph{fixed-parameter} algorithms for the problem, i.e., algorithms which run in time $f(k)\cdot n^{\bigoh(1)}$ where $f$ is a computable function and $n$ the input size; the aforementioned result of Robertson and Seymour is hence an example of a fixed-parameter algorithm where $k=|P|$, and we say that the problem is \FPT (w.r.t.\ this particular parameterization). In cases where fixed-parameter algorithms are unlikely to exist, one can instead aim for so-called \XP\ algorithms, i.e., algorithms which run in polynomial time for every fixed value of $k$.

Naturally, one prominent question that arises is whether we can use the structure of the input graph  itself (captured via a \emph{structural parameter}) to solve \EDP\ and \VDP. Here, we find a stark contrast in the difficulty between these two, otherwise closely related, problems. Indeed, while \VDP\ is known to be \FPT\ with respect to the well-established structural parameter \emph{treewidth}~\cite{Scheffler94}, \EDP\ is \NP-hard even on graphs of treewidth $3$~\cite{FleszarMS16}. What's worse, the same reduction shows that \EDP\ remains \NP-hard even on graphs with a vertex cover of size $3$~\cite{FleszarMS16}, which rules out fixed-parameter and \XP\ algorithms for the vast majority of studied graph parameters (including, e.g., \emph{treedepth} and the \emph{size of a minimum feedback vertex set}). 

We note that previous research on the problem has found ways of circumventing these negative results by imposing additional restrictions. Zhou, Tamura and Nishizeki~\cite{ZhouTN00} introduced the notion of an augmented graph, which contains information about how terminal pairs need to be connected, and used the treewidth of this graph to solve \EDP.
Recent work~\cite{GanianOS17} has also observed that \EDP\ admits a fixed-parameter algorithm when parameterized by treewidth and the maximum degree of the graph.
\smallskip

\textbf{Our Contribution.}
The aim of this paper is to provide new algorithms and matching lower bounds for solving the \textsc{Edge Disjoint Paths} problem \emph{without imposing any restrictions on the number and placement of terminals}. In other words, our aim is to be able to identify structural properties of the graph which guarantee tractability of the problem without knowing any information about the placement of terminals. The only positive result known so far in this setting requires us to restrict the degree of the input graph; however, in the bounded-degree setting there is a simple treewidth-preserving reduction from \EDP\ to \VDP\ (see Proposition~\ref{pro:edpvdp}), and so the problem only becomes truly interesting when the input graphs can contain vertices of higher degree.

Our main result is an \XP\ algorithm for \EDP\ when parameterized by the structural parameter \emph{treecut width}~\cite{Wollan15,MarxWollan14}. Treecut width is inherently tied to the theory of graph immersions; in particular, it has a similar relationship to graph immersions and cuts as treewidth has to graph minors and separators. Since its introduction, treecut width has been successfully used to obtain fixed-parameter algorithms for problems which are unlikely to be \FPT\ w.r.t.\ treewidth~\cite{GanianKimSzeider15,GanianKO18}; however, this is the first time that it has been used to obtain an algorithm for a problem that is \NP-hard on graphs of bounded treewidth. 

One ``feature'' of algorithmically exploiting treecut width is that it requires the solution of a non-trivial dynamic programming step. In previous works, this was carried out mostly by direct translations into \textsc{Integer Linear Programming} instances with few integer variables~\cite{GanianKimSzeider15} or by using network flows~\cite{GanianKO18}. In the case of \EDP, the dynamic programming step requires us to solve an instance of \EDP\ with a vertex cover of size $k$ where every vertex outside of the vertex cover has a degree of $2$; we call this problem \textsc{Simple EDP} and solve it in the dedicated Section~\ref{sec:simple}. It is worth noting that there is only a very small gap between \textsc{Simple EDP} (for which we provide an \XP\ algorithm) and graphs with a vertex cover of size $3$ (where \EDP\ is known to be \NP-hard).

In view of our main result, it is natural to ask whether the algorithm can be improved to a fixed-parameter one. After all, given the parallels between \EDP\ parameterized by treecut width (an edge-separator based parameter) and \VDP\ parameterized by treewidth (a vertex-separator based parameter), one would rightfully expect that the fixed-parameter tractability result on the latter~\cite{Scheffler94} would be mirrored in the former case. Surprisingly, we rule this out by showing that \EDP\ parameterized by treecut width is \W{1}-hard~\cite{DowneyFellows13,CyganFKLMPPS14} and hence unlikely to be fixed-parameter tractable; in fact, we obtain this lower-bound result even in the more restrictive setting of \textsc{Simple EDP}. The proof is based on an involved reduction from an adapted variant of the \textsc{Multidimensional Subset Sum} problem~\cite{GanianOS17,GanianKO18} and forms our second main contribution.

Having ruled out fixed-parameter algorithms for \EDP\ parameterized by treecut width and in view of previous lower-bound results, one may ask whether it is even possible to obtain such an algorithm for any reasonable parameterization. We answer this question positively by using the size of a minimum feedback edge set as a parameter. In fact, we show an even stronger result: as our final contribution, we obtain a so-called \emph{linear kernel}~\cite{DowneyFellows13,CyganFKLMPPS14} for \EDP\ parameterized by the size of a minimum feedback edge set.
\smallskip

\textbf{Organization of the Paper.} After introducing the required preliminaries in Section~\ref{sec:pre}, we proceed to introducing \textsc{Simple EDP}, solving it via an \XP\ algorithm and establishing our lower-bound result (Section~\ref{sec:simple}). Section~\ref{sec:tcwalg} then contains our algorithm for \EDP\ parameterized by treecut width. Finally, in Section~\ref{sec:fes} we obtain a polynomial kernel for \EDP\ parameterized by the size of a minimum feedback edge set.

\sv{\smallskip \noindent {\emph{Statements whose proofs are located in the appendix are marked with $\star$.}}}

\section{Preliminaries}
\label{sec:pre}

We use standard terminology for graph theory, see for
instance~\cite{Diestel10}.
Given a graph $G$, we let $V(G)$ denote its vertex set and $E(G)$ its
edge set.
The (open) neighborhood of a vertex $x \in V(G)$ is the set $\{y\in V(G):xy\in E(G)\}$ and is denoted by $N_G(x)$. For a vertex subset $X$, the neighborhood of $X$ is defined as $\bigcup_{x\in X} N_G(x) \setminus X$ and denoted by $N_G(X)$; we drop the subscript if the graph is clear from the context. \emph{Contracting} an edge $\{a,b\}$ is the operation of replacing vertices $a,b$ by a new vertex whose neighborhood is $(N(a)\cup N(b))\setminus \{a,b\}$.
For a vertex set $A$ (or edge set $B$), we use $G-A$ ($G-B$) to denote the graph obtained from
$G$ by deleting all vertices in $A$ (edges in $B$), and we use $G[A]$ to denote the
\emph{subgraph induced on} $A$, i.e., $G- (V(G)\setminus A)$. 

A \emph{forest} is a graph without cycles, and an edge set $X$ is a
\emph{feedback edge set} if $G-X$ is a forest. The \emph{feedback edge set
  number} of a graph $G$, denoted by $\fes(G)$, is the smallest
integer $k$ such that $G$ has a feedback edge set of size $k$.
We use $[i]$ to
denote the set $\{0,1,\dots,i\}$.

\subsection{Parameterized Complexity}
\lv{
A \emph{parameterized problem} $\PP$ is a subset of $\Sigma^* \times \Nat$ for some finite alphabet $\Sigma$. Let $L\subseteq \Sigma^*$ be a classical decision problem for a finite alphabet, and let $p$ be a non-negative integer-valued function defined on $\Sigma^*$. Then $L$ \emph{parameterized by} $p$ denotes the parameterized problem $\SB(x,p(x)) \SM x\in L \SE$ where $x\in \Sigma^*$.  For a problem instance $(x,k) \in \Sigma^* \times \Nat$ we call $x$ the main part and $k$ the parameter.  
A parameterized problem $\PP$ is \emph{fixed-parameter   tractable} (FPT in short) if a given instance $(x, k)$ can be solved in time  $\bigoh(f(k) \cdot p(|x|))$ where $f$ is an arbitrary computable function of $k$ and $p$ is a polynomial function; we call algorithms running in this time \emph{fixed-parameter algorithms}.

  Parameterized complexity classes are defined with respect to {\em fpt-reducibility}. A parameterized problem $P$ is {\em fpt-reducible} to $Q$ if in time $f(k)\cdot |x|^{O(1)}$, one can transform an instance $(x,k)$ of $\PP$ into an instance $(x',k')$ of $\QQ$ such that $(x,k)\in \PP$ if and only if $(x',k')\in \QQ$, and $k'\leq g(k)$, where $f$ and $g$ are computable functions depending only on $k$. 
 Owing to the definition, if $\PP$ fpt-reduces to $\QQ$
 and $\QQ$ is fixed-parameter tractable then $P$ is fixed-parameter
 tractable as well. 
 Central to parameterized complexity is the following hierarchy of complexity classes, defined by the closure of canonical problems under fpt-reductions:
 \[\FPT \subseteq \W{1} \subseteq \W{2} \subseteq \cdots \subseteq \XP.\] All inclusions are believed to be strict. In particular, $\FPT\neq \W{1}$ under the Exponential Time Hypothesis. 

 A major goal in parameterized complexity is to distinguish between parameterized problems which are in $\FPT$ and those which are $\W{1}$-hard, i.e., those to which every problem in $\W{1}$ is fpt-reducible. There are many problems shown to be complete for $\W{1}$, or equivalently $\W{1}$-complete, including the {\sc Multi-Colored Clique (MCC)} problem~\cite{DowneyFellows13}. We refer the reader to the respective monographs~\cite{FlumGrohe06,DowneyFellows13,CyganFKLMPPS15} for an in-depth
introduction to parameterized complexity.
 }
  \sv{
The parameterized complexity paradigm~\cite{FlumGrohe06,DowneyFellows13,CyganFKLMPPS15} allows a finer analysis of the complexity of problems by associating each problem instance $L$ with a numerical parameter $k$; the pair $(L,k)$ is then an instance of a parameterized problem. A parameterized problem is \emph{fixed-parameter   tractable} (\FPT\ in short) if a given instance $(L, k)$ can be solved in time  $f(k) \cdot |L|^{\bigoh(1)}$ where $f$ is an arbitrary computable function; we call algorithms running in this time \emph{fixed-parameter algorithms}. The complexity class \W{1} is often used to rule out fixed-parameter algorithms for a parameterized problem: under established complexity assumptions (including the \emph{Exponential Time Hypothesus}~\cite{ImpagliazzoPZ01}), problems that are hard for $\W{1}$ do not admit fixed-parameter algorithms. A parameterized problem $\PP$ is said to admit a \emph{linear kernel} if there exists a polynomial-time algorithm which converts an instance $(G,K)$ of $\PP$ into an equivalent instance $(G',k')$ such that $|G'|+k'\in \bigoh(k)$.
  }
  
  \subsection{Edge Disjoint Path Problem}\label{ssec:edp}

  Throughout the paper we consider the following problem.
\pbDef{\textsc{Edge Disjoint Paths (EDP)}}
{A graph $G$ and a set $P$ of \emph{terminal pairs}, i.e., a set of subsets of $V(G)$ of size two.}
{Is there a set of pairwise edge disjoint paths connecting every set
  of terminal pairs in $P$?}

  A vertex which occurs in a terminal pair is called a \emph{terminal}, and a set of pairwise edge disjoint paths connecting every set of terminal pairs in $P$ is called a \emph{solution}. Without loss of generality, we assume that $G$ is connected. The \textsc{Vertex Disjoint Paths (VDP)} problem is defined analogously as \EDP, with the sole distinction being that the paths must be vertex-disjoint.

The following proposition establishes a link between \EDP\ and \VDP\ on graphs of bounded degree. Since we will not use the notion of \emph{treewidth}~\cite{RobertsonS03b} anywhere else in the paper, we refer to the standard textbooks~\cite{DowneyFellows13,CyganFKLMPPS15} for its definition. 

\lv{\begin{proposition}}
\sv{\begin{proposition}[$\star$]}
\label{pro:edpvdp}
There exists a linear-time reduction from \EDP\ to \VDP\ with the following property: if the input graph has treewidth $k$ and maximum degree $d$, then the output graph has treewidth at most $k\cdot d+1$.
\end{proposition}

\lv{
\begin{proof}
Let $(G,P)$ be an instance of \EDP\ where $G$ has treewidth $k$ and maximum degree $d$; let $V=V(G)$ and $E=E(G)$. Observe that if any vertex $v\in V$ occurs in $P$ more than $d$ many times, then $(G,P)$ must be a \textbf{NO}-instance (we assume that $P$ does not contain tuples in the form $(a,a)$ for any $a$). 

Consider the graph $G'$ obtained in the following two-step procedure. First, we subdivide each edge in $G$ (i.e., we replace that edge with a vertex of degree $2$ that is adjacent to both endpoints of the original edge); let $V'$ be the set of vertices created by such subdivisions. Second, for each vertex $v=v_1\in V$ of the original graph $G$, we create $d-1$ copies $v_2,\dots, v_d$ of that vertex and set their neighborhood to match that of $v_1$. This construction gives rise to a natural mapping $\alpha$ from $G$ to $G'$ which maps each $v\in V$ to the set $v_1,\dots,v_d$ and each $e\in E$ to the vertex created by subdividing $e$. Next, we iteratively process $P$ as follows: for each $\{v,w\}\in P$, we add a tuple $\{v', w'\}$ into the set $P'$ such that $v'\in \alpha(v)$, $w'\in \alpha(w)$ and neither $v'$ nor $w'$ occurs in any other pair in $P'$ (the last condition can be ensured because each vertex in $v$ has $d$ copies in $G'$ but never occurs more than $d$ times in $P$).

It is now easy to verify that $(G,P)$ is a \textbf{YES}-instance of \EDP\ if and only if $(G',P')$ is a \textbf{YES}-instance of \VDP; indeed, such solutions can be converted to each other by applying $\alpha$ on each path, whereas for the forward direction we simply need to make sure that each path that passes through a vertex $v\in V$ uses a new vertex from $\alpha(v)$. Finally, one can convert any tree-decomposition $(T,X)$~\cite{DowneyFellows13} of $G$ of width $k$ into a tree-decomposition of $G'$ of width $k\cdot d+1$ by (1) replacing each vertex $v$ by $\alpha(v)$ in $T$, and then (2) by choosing, for each edge $e=ab\in E$, a bag $X\supseteq \{a,b\}$, creating a bag $X'=X\cup \{\alpha(e)\}$, and attaching $X'$ to $X$ as a leaf.
\end{proof}
}

We remark that Proposition~\ref{pro:edpvdp} in combination with the known fixed-parameter algorithm for \VDP\ parameterized by treewidth~\cite{Scheffler94} provides an alternative proof for the fixed-parameter tractability of \EDP\ parameterized by degree and treewidth~\cite{GanianOS17}. 

\subsection{Treecut Width}
\label{sub:tcw}
The notion of treecut decompositions was introduced by Wollan~\cite{Wollan15}, see also~\cite{MarxWollan14}.
A family of subsets $X_1, \ldots, X_{k}$ of $X$ is a {\em near-partition} of $X$ if they are pairwise disjoint and $\bigcup_{i=1}^{k} X_i=X$, allowing the possibility of $X_i=\emptyset$.  

\begin{definition}
A {\em treecut decomposition} of $G$ is a pair $(T,\mathcal{X})$ which consists of a rooted tree $T$ and a near-partition $\mathcal{X}=\{X_t\subseteq V(G): t\in V(T)\}$ of $V(G)$. A set in the family $\mathcal{X}$ is called a {\em bag} of the treecut decomposition. 
\end{definition}

For any node $t$ of $T$ other than the root $r$, let $e(t)=ut$ be the unique edge incident to $t$ on the path to $r$. Let $T_u$ and $T_t$ be the two connected components in $T-e(t)$ which contain $u$ and $t$, respectively. Note that $(\bigcup_{q\in T_u} X_q, \bigcup_{q\in T_t} X_q)$ is a near-partition of $V(G)$, and we use $E_t$ to denote the set of edges with one endpoint in each part. We define the {\em adhesion} of $t$ ($\adh(t)$) as $|E_t|$; if $t$ is the root, we set $\adh(t)=0$ and $E(t)=\emptyset$.

The {\em torso} of a treecut decomposition $(T,\mathcal{X})$ at a node $t$, written as $H_t$, is the graph obtained from $G$ as follows. If $T$ consists of a single node $t$, then the torso of $(T,\mathcal{X})$ at $t$ is $G$. Otherwise let $T_1, \ldots , T_{\ell}$ be the connected components of $T-t$. For each $i=1,\ldots , \ell$, the vertex set $Z_i\subseteq V(G)$ is defined as the set $\bigcup_{b\in V(T_i)}X_b$. The torso $H_t$ at $t$ is obtained from $G$ by {\em consolidating} each vertex set $Z_i$ into a single vertex $z_i$ (this is also called \emph{shrinking} in the literature). Here, the operation of consolidating a vertex set $Z$ into $z$ is to substitute $Z$ by $z$ in $G$, and for each edge $e$ between $Z$ and $v\in V(G)\setminus Z$, adding an edge $zv$ in the new graph. We note that this may create parallel edges.

The operation of {\em suppressing} (also called \emph{dissolving} in the literature) a vertex $v$ of degree at most $2$ consists of deleting~$v$, and when the degree is two, adding an edge between the neighbors of $v$. Given a connected graph $G$ and  $X\subseteq V(G)$, let the {\em 3-center} of $(G,X)$ be the unique graph obtained from $G$ by exhaustively suppressing vertices in $V(G) \setminus X$ of degree at most two. Finally, for a node $t$ of $T$, we denote by $\tilde{H}_t$ the 3-center of $(H_t,X_t)$, where $H_t$ is the torso of $(T,\mathcal{X})$ at $t$. 
Let the \emph{torso-size} $\tor(t)$ denote $|\tilde{H}_t|$. 

\begin{definition}
The width of a treecut decomposition $(T,\mathcal{X})$ of $G$ is $\max_{t\in V(T)}\{ \adh(t), \tor(t) \}$. The treecut width of $G$, or $\tcw(G)$ in short, is the minimum width of $(T,\mathcal{X})$ over all treecut decompositions $(T,\mathcal{X})$ of $G$.
\end{definition}

We conclude this subsection with some notation related to treecut decompositions. 
Given a tree node $t$, let $T_t$ be the subtree of $T$ rooted at $t$. Let $Y_t=\bigcup_{b\in V(T_t)} X_b$, and let $G_t$  denote the induced subgraph $G[Y_t]$. 
A node $t\neq r$ in a rooted treecut decomposition is \emph{thin} if $\adh(t)\leq 2$ and \emph{bold} otherwise.

\begin{figure}[ht]
\begin{center}
\begin{tikzpicture}
\tikzstyle{every node}=[draw, shape=circle, minimum size=3pt,inner sep=2pt, fill=white]

\draw (0,0) node[label=left:$a$] (a){}; 
\draw (1,0) node [label=270:$d$] (d){};
\draw (0,1) node[label=left:$b$] (b){}; 
\draw (1,1) node [label=right:$c$] (c){};
\draw (2,0) node [label=right:$e$](e){}; \draw (2,0.5) node [label=right:$f$](f){};
\draw (2,1) node [label=right:$g$](g){};
\draw (b)--(d)--(a)--(b)--(c)--(d)--(e)--(f); \draw (f)--(d)--(g);
\end{tikzpicture}
\quad\quad
\begin{tikzpicture}[scale=0.9]
\tikzstyle{cloud} = [draw, ellipse, minimum height=0em, minimum width=3em]

\node (d) at (0,2)[cloud,label=left:{$(2,0)$}] {d};
\node (a) at (-1,1)[cloud,label=left:{$(3,3)$}] {a};
\node (bc) at (-1,0)[cloud,label=left:{$(3,3)$}] {bc};
\node (e) at (0.5,1)[cloud,label=270:{$(1,2)$}] {e};
\node (f) at (2,1)[cloud,label=270:{$(1,2)$}] {f};
\node (g) at (3.5,1)[cloud,label=270:{$(1,1)$}] {g};

\draw[ultra thick] (bc)--(a)--(d);
\draw (d)--(e); \draw (f)--(d)--(g);

\tikzstyle{empty}=[draw, shape=circle, minimum size=3pt, inner sep=0pt, fill=white, color=white]


\end{tikzpicture}
\end{center}
\vspace{-0.5cm}
\caption{A graph $G$ and a width-$3$ treecut decomposition of $G$, including the torso-size (left value) and adhesion (right value) of each node.}
\vspace{-0.2cm}
\end{figure}
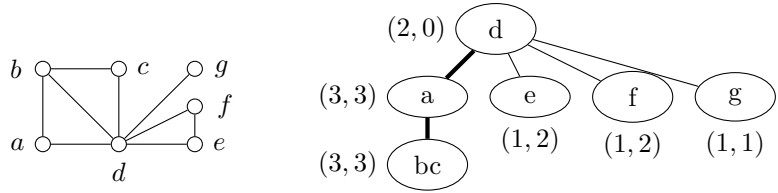

While it is not known how to compute optimal treecut decompositions efficiently, there exists a fixed-parameter 2-approximation algorithm which fully suffices for our purposes.

\begin{theorem}[\cite{KimOPST15}]
\label{thm:computetcdec}
There exists an algorithm that takes as input an $n$-vertex graph $G$ and integer $k$, runs in time $2^{\bigoh(k^2 \log k)} n^2$, and either outputs a treecut decomposition of $G$ of width at most $2k$ or correctly reports that $\tcw(G)> k$.
\end{theorem}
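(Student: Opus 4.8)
The statement is precisely the fixed-parameter $2$-approximation for treecut decompositions of~\cite{KimOPST15}, and the plan is to follow the Robertson--Seymour paradigm for width parameters --- \emph{either build a decomposition of bounded width, or certify a large obstruction} --- transported from vertex separators (as used for treewidth) to the edge-cut setting underlying treecut width. The first task is to fix the right notion of obstruction. For treewidth this role is played by brambles/tangles; here the analogue is an \emph{edge-well-linked} set $S\subseteq V(G)$, meaning that for every bipartition $(A,B)$ of $S$ into roughly balanced parts, the minimum number of edges whose deletion separates $A$ from $B$ is $\Omega(k)$. The key structural lemma to prove is a duality: if $G$ admits no treecut decomposition of width at most $2k$, then $G$ contains an edge-well-linked set of size $\Omega(k)$, while conversely such a set forbids width $\le k$; the converse direction supplies the ``correctly reports $\tcw(G)>k$'' branch.

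\textbf{The algorithm.} I would build a treecut decomposition top-down. At a generic stage we work on a region $W\subseteq V(G)$ that attaches to the rest of $G$ through an edge cut of size at most $2k$ (destined to become an adhesion), together with the bounded set of boundary vertices incident to it. The procedure searches for a \emph{balanced edge separator} of $W$ of order $\bigoh(k)$: a partition of $W$ whose parts each have size at most $\tfrac{2}{3}|W|$, with few edges crossing between parts or to the boundary. If one exists it is found by enumerating all partitions of the $\bigoh(k)$ boundary vertices (respecting parallel-edge multiplicities) and, for each, running a max-flow/min-cut computation; we then recurse into the strictly smaller pieces, each again carrying a bounded boundary. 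The enumeration over bounded-size boundary structures is what produces the $2^{\bigoh(k^2\log k)}$ factor, and with careful amortization over the recursion (each node performing a batch of flow computations on $G$) the remaining dependence on $n$ is quadratic, matching the claimed bound.

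\textbf{The base case and wrap-up.} If $W$ has no balanced edge separator of order $\bigoh(k)$ it is ``edge-highly-connected'', and the plan is to prove a dichotomy: either $W$ --- after consolidating each already-processed sibling subtree into a single apex vertex --- forms a torso whose $3$-center has at most $2k$ vertices, so $W$ becomes one bag with $\tor(t)\le 2k$; or the high connectivity of $W$ exposes an edge-well-linked set of size $\Omega(k)$ and we stop, reporting $\tcw(G)>k$ via the duality lemma. Standard bookkeeping then shows every node of the output satisfies $\adh(t)\le 2k$ and $\tor(t)\le 2k$, i.e.\ width at most $2k$, within the stated running time; the polynomial bound $\tw(G)=\bigoh(\tcw(G)^2)$ can be used as a consistency check and to bound the sizes of the structures enumerated.

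\textbf{Main obstacle.} The genuinely new difficulty relative to the treewidth analogue is that we must control the \emph{torso-size}, not merely the bag-size: the $3$-center of each torso --- what remains after suppressing every degree-$\le 2$ vertex outside the bag --- must stay small, so one cannot simply place a highly connected region into a bag and be finished. Establishing the base-case dichotomy (small $3$-center versus large edge-well-linked set), and arranging the recursion so that consolidating sibling subtrees never blows up a torso, is where essentially all of the technical weight resides, and it is exactly this requirement that separates treecut decompositions from ordinary tree decompositions.
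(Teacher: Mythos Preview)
The paper does not prove this theorem at all: it is stated with a citation to~\cite{KimOPST15} and used as a black box, so there is no ``paper's own proof'' to compare against. Your proposal is a high-level sketch of the Kim--Oum--Paul--Sau--Thilikos algorithm itself, which is outside the scope of the present paper; here the correct ``proof'' is simply the citation.

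That said, as a sketch of the external result your outline is in the right spirit (Robertson--Seymour style recursive decomposition, balanced edge separators, duality with an obstruction), but it is not a proof: the duality lemma, the base-case dichotomy controlling torso-size, and the $2^{\bigoh(k^2\log k)}n^2$ accounting are all asserted rather than established, and each is a substantial piece of work in~\cite{KimOPST15}. For the purposes of this paper no such argument is needed.
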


\sv{
A treecut decomposition $(T,\mathcal{X})$ is \emph{nice} if it satisfies the following condition for every thin node $t\in V(T)$: $N(Y_t)\cap (\bigcup_{b\text{ is a sibling of }t}Y_b)=\emptyset$.
The intuition behind nice treecut decompositions is that we restrict the neighborhood of thin nodes in a way which facilitates dynamic programming. Every treecut decomposition can be transformed into a nice treecut decomposition of the same width in cubic time~\cite{GanianKimSzeider15}.

For a node $t$, we let $B_t=\SB b\text{ is a child of }t\SM |N(Y_b)|\leq 2\wedge N(Y_b)\subseteq X_t \SE$ denote the set of thin children of $t$ whose neighborhood is a subset of $X_t$, and we let $A_t=\SB a\text{ is a child of }t\SM a\not \in B_t \SE$ be the set of all other children of $t$.
Then $|A_t|\leq 2k+1$ for every node $t$ in a nice treecut decomposition~\cite{GanianKimSzeider15}.
}
\lv{
A treecut decomposition $(T,\mathcal{X})$ is \emph{nice} if it satisfies the following condition for every thin node $t\in V(T)$: $N(Y_t)\cap (\bigcup_{b\text{ is a sibling of }t}Y_b)=\emptyset$.
The intuition behind nice treecut decompositions is that we restrict the neighborhood of thin nodes in a way which facilitates dynamic programming. 

\begin{lemma}[\cite{GanianKimSzeider15}]
\label{lem:nicetcdec}
There exists a cubic-time algorithm which transforms any rooted treecut decomposition $(T,\mathcal{X})$ of $G$ into a nice treecut decomposition of the same graph, without increasing its width or number of nodes.
\end{lemma}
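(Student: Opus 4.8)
\emph{Proof idea.} The plan is to reach a nice decomposition by iteratively \emph{reattaching} offending thin nodes closer to their subtrees' neighbourhoods. While some thin node $t$ violates niceness, fix a sibling $b$ of $t$ with $N(Y_t)\cap Y_b\neq\emptyset$ — such a $b$ exists by the very definition of niceness, and $b\notin V(T_t)$ because siblings span disjoint subtrees — and modify $T$ by deleting the edge from $t$ to its parent $u$ and inserting the edge $\{b,t\}$, leaving the bag family $\mathcal{X}$ untouched. The result is again a rooted tree carrying a near-partition of $V(G)$, i.e.\ a treecut decomposition of the same graph, and the number of nodes is unchanged. When the loop terminates, no thin node has a sibling meeting $N(Y_t)$, which is exactly the niceness condition.

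To bound the number of iterations I would use the potential $\Phi(T)=\sum_{q\in V(T)}\mathrm{depth}_T(q)$. A reattachment moves the whole subtree $T_t$ one level deeper (its root passes from a child of $u$ to a child of the sibling $b$), so $\Phi$ increases by $|V(T_t)|\ge 1$, while $0\le\Phi<|V(T)|^2$ throughout; hence there are $\bigoh(|V(T)|^2)$ reattachments. Moreover a reattachment changes $Y_q$ only for $q=b$ (it becomes $Y_b\cup Y_t$) and alters sibling relations only among the children of $b$, so afterwards one only has to re-test $t$, $b$, and the former children of $b$ for violations; maintaining a work-list of candidate violators then gives the claimed cubic running time.

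The crux is that a reattachment never increases the width. Since only $Y_b$ changes, every adhesion except possibly $\adh(b)$ is untouched; writing $m_1\ge 1$ for the number of edges of $E_t$ with an endpoint in $Y_b$ and $m_2=\adh(t)-m_1$ for the remaining ones, one gets $E_b^{\mathrm{new}}=(E_b\setminus\{m_1\text{ edges}\})\cup\{m_2\text{ edges}\}$ as a disjoint union, so $\adh(b)$ changes by $m_2-m_1\le 0$, because thinness of $t$ forces $m_1+m_2=\adh(t)\le 2$ with $m_1\ge 1$. For the torsos one first checks that a reattachment changes only $H_u$ and $H_b$: for every other node $q$, deleting $q$ from the old and from the new tree induces the same partition of $V(G)$ into the vertex sets consolidated in the torso at $q$. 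The piece $Y_t$ contributes to $H_u$ and $H_b$ only a single side vertex of degree $\adh(t)\le 2$ — in $H_u$ it is then merged into the side vertex of $Y_b$, while in $H_b$ the side vertex of $V(G)\setminus Y_b$ splits off a fresh side vertex of $Y_t$. Since this degree-$\le 2$ side vertex is dissolved when the $3$-center is taken, it is never a vertex of $\tilde{H}_u$ or $\tilde{H}_b$; and a case distinction on $\adh(t)\in\{0,1,2\}$ and on how the (at most two) edges of $E_t$ distribute shows that dissolving it leaves a graph whose $3$-center is no larger than that of the old torso, since the only rewiring it forces on the surviving side vertex ($z_b$, resp.\ $z_{\mathrm{up}}$) just reproduces incidences that vertex already carried — together with the elementary fact that deleting edges or loops cannot enlarge a $3$-center. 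Hence every adhesion and every torso-size is non-increasing, so the width is preserved.

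I expect the torso part of the width analysis to be the main obstacle: the $3$-center is defined by an iterative suppression process whose effect on the vertex count is not manifestly monotone under the local surgeries (merging two side vertices, splitting a side vertex, inserting the edge created by a suppression) that a single reattachment performs, so one has to exploit carefully that the subtree being moved hangs off a \emph{thin} node — whence the side vertex it contributes always has degree at most two and disappears during the $3$-center computation — and then reduce the remaining perturbation to edge and loop deletions, which are easily seen to be harmless.
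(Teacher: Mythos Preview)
The paper does not prove this lemma at all; it is quoted from~\cite{GanianKimSzeider15} and used as a black box, so there is no in-paper proof to compare against.

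That said, your proposal is essentially the argument given in the cited source: repeatedly push an offending thin node $t$ down to become a child of a sibling $b$ whose subtree meets $N(Y_t)$, use a depth-sum potential to bound the number of reattachments by $|V(T)|^2$, and argue that adhesions and torso-sizes do not grow. The sketch is correct. Your observation that $m_1\ge 1$ (some edge of $E_t$ must reach into $Y_b$, by the choice of $b$) is precisely what makes the torso analysis at the old parent $u$ go through: it forces the consolidated vertex $z_t$ in the old $H_u$ to be adjacent to $z_b$, so merging $z_t$ into $z_b$ coincides, up to loops, with suppressing $z_t$---which the $3$-center would have done anyway since $z_t\notin X_u$ has degree at most two. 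The companion check at the new parent $b$ is, as you say, a short case split on $(m_1,m_2)\in\{(1,0),(2,0),(1,1)\}$. One small omission in your re-testing list: after a reattachment $Y_b$ grows, so thin \emph{siblings of $b$} may newly violate niceness (and $b$ itself may become thin if its adhesion drops); but this does not affect the cubic bound, since the potential argument handles termination and a linear rescan per step already suffices.
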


For a node $t$, we let $B_t=\SB b\text{ is a child of }t\SM |N(Y_b)|\leq 2\wedge N(Y_b)\subseteq X_t \SE$ denote the set of thin children of $t$ whose neighborhood is a subset of $X_t$, and we let $A_t=\SB a\text{ is a child of }t\SM a\not \in B_t \SE$ be the set of all other children of $t$.
The following property of nice treecut decompositions will be crucial for our algorithm.

\begin{lemma}[\cite{GanianKimSzeider15}]
\label{lem:Asmall}
Let $t$ be a node in a nice treecut decomposition of width $k$. Then $|A_t|\leq 2k+1$.
\end{lemma}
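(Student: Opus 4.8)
The plan is to charge each child in $A_t$ either to an edge of the cut $E_t$ defining $\adh(t)$ or to a vertex of the 3-center $\tilde H_t$ defining $\tor(t)$, leaving one unit of slack for a boundary case; since $\adh(t)\le k$ and $\tor(t)=|\tilde H_t|\le k$, this yields $|A_t|\le 2k+1$. Recall that in the torso $H_t$ every child $c$ of $t$ is represented by a vertex $z_c$ whose incident edges correspond exactly to the edges of $G$ with one endpoint in $Y_c$; thus $z_c$ has degree $\adh(c)$ in $H_t$, and its neighbours are the vertices of $N(Y_c)\cap X_t$, the vertices $z_b$ for siblings $b$ with an edge of $G$ between $Y_b$ and $Y_c$, and --- when $t$ is not the root --- the vertex $z_R$ representing the ancestor part $R:=V(G)\setminus Y_t$, which is present iff $G$ has an edge between $Y_c$ and $R$.

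First I would split $A_t$ into the set $A_t^R$ of children $a$ such that $Y_a$ has an edge to $R$, and the rest. Every $a\in A_t^R$ contributes at least one edge of $E_t$, and distinct children contribute distinct such edges because the sets $Y_a$ are pairwise disjoint; hence $|A_t^R|\le|E_t|=\adh(t)\le k$. Now take a child $a\in A_t\setminus A_t^R$. Since $Y_a$ has no edge to $R$, every neighbour of $z_a$ in $H_t$ lies in $X_t$ or is the $z$-vertex of a sibling of $a$. Such an $a$ cannot be thin: if $\adh(a)\le 2$ then $|N(Y_a)|\le 2$, so $a\in A_t$ forces $N(Y_a)\not\subseteq X_t$, and the escaping edge would then reach a sibling $Y_b$, contradicting niceness of the thin node $a$. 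So $a$ is bold, $z_a$ has degree $\adh(a)\ge 3$ in $H_t$, and --- applying the same argument to $b$ --- every sibling-neighbour $z_b$ of $z_a$ is the $z$-vertex of a bold child as well.

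It then remains to show that all but at most one such $z_a$ survive the exhaustive suppression producing $\tilde H_t$. The vertices of $X_t$ are never suppressed, suppressing a degree-$2$ vertex does not lower any neighbour's degree, and niceness ensures the $z$-vertex of a thin child is never adjacent to a sibling's $z$-vertex, even in intermediate graphs. Following the evolution of degrees under these constraints, the only non-$X_t$ vertex whose suppression can pull a bold child's $z$-vertex below degree $3$ is $z_R$, and this can occur for at most one bold child. Hence at most one child of $A_t\setminus A_t^R$ is missing from $\tilde H_t$, and the remaining ones correspond to pairwise distinct vertices of $V(\tilde H_t)\setminus X_t$, so $|A_t\setminus A_t^R|\le\tor(t)+1\le k+1$. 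Adding the two estimates gives $|A_t|\le 2k+1$.

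I expect the main obstacle to be this last step --- bounding the cascade of suppressions in the 3-center. Suppressions create new edges, so a $z$-vertex's neighbourhood changes during the process, and one must rule out the appearance of a ``bridging'' edge between two children's $z$-vertices that would set off a long chain of deletions. This is precisely where niceness does the work: forbidding a thin child from being adjacent to its siblings eliminates the short paths through thin nodes along which degree drops could otherwise propagate, so the only uncontrolled source of degree loss is the single ancestor vertex $z_R$, which accounts for the additive term.
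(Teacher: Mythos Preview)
The paper does not prove this lemma: it is stated with a citation to \cite{GanianKimSzeider15} and no argument is given here. So there is no ``paper's own proof'' to compare against; the statement is imported as a black box.

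That said, your approach is the natural one and is essentially how the original result is obtained: charge thin children in $A_t$ to edges of $E_t$ (via niceness they must reach $R$), and charge bold children to vertices of $\tilde H_t$. Your identification of the obstacle is also right, but the resolution is cleaner than you suggest. The key invariant is that suppressing a degree-$2$ vertex never changes any other vertex's degree (in the multigraph convention used for torsos, this holds even with parallel edges, since a self-loop contributes $2$ to the degree). Hence the only way a bold $z_a$ can lose degree is through a degree-$1$ suppression of a neighbour outside $X_t$. By niceness the thin $z_c$'s are adjacent only to $X_t\cup\{z_R\}$, so processing them leaves every bold $z_a$ at degree $\ge 3$; the sole remaining candidate is $z_R$, and a degree-$1$ suppression of $z_R$ drops exactly one neighbour by exactly one. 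That neighbour, if bold, is then suppressed at degree $2$, which again preserves all degrees --- so the cascade halts after one step. In particular, the single bold child that may disappear is necessarily in $A_t^R$ (it was adjacent to $z_R$), so every $a\in A_t\setminus A_t^R$ actually survives in $\tilde H_t$. Your argument therefore yields $|A_t\setminus A_t^R|\le\tor(t)\le k$ without the extra $+1$, and hence $|A_t|\le 2k$, slightly sharper than stated.
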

}

We refer to previous work~\cite{MarxWollan14,KimOPST15,GanianKimSzeider15} for a comparison of treecut width to other parameters. Without loss of generality, we shall assume that $X_r=\emptyset$.

\section{The Simple Edge Disjoint Paths Problem}
\label{sec:simple}

Before we start working towards our algorithm for solving \EDP\ parameterized by treecut width, we will first deal with a simpler (but crucial) setting for the problem. We call this the \textsc{Simple Edge Disjoint Paths} problem (\textsc{Simple EDP)} and define it below.

\pbDefP{
\textsc{Simple EDP}}
{An \textsc{EDP} instance $(G,P)$ such that $V(G)=A\cup B$ where $B$ is an independent set containing vertices of degree at most $2$.}
{$k=|A|$}
{Is $(G,P)$ a \textbf{YES}-instance of \textsc{EDP}?}

Notice that every instance of \textsc{Simple EDP} has treecut width at most $k$, and so it forms a special case of \EDP\ parameterized by treecut width. Indeed, the treecut decomposition where $T$ is a star, the center bag contains $A$, and each leaf bag contains a vertex from $B$ (except for the root $r$, where $X_r=\emptyset$), has treecut width at most $k$. This contrasts to the setting where $G$ has a vertex cover of size $3$ and all vertices outside the vertex cover have degree $3$; the treecut width of such graphs is not bounded by any constant, and \EDP\ is known to be \NP-complete in this setting~\cite{FleszarMS16}.

The main reason we introduce and focus on \textsc{Simple EDP} is that it captures the combinatorial problem that needs to be solved in the dynamic step of the algorithm for \EDP\ parameterized by treecut width. Hence, our first task here will be to solve \textsc{Simple EDP} by an algorithm that can later be called as a subroutine.

\lv{\begin{lemma}}
\sv{\begin{lemma}[$\star$]}
\label{lem:simplealg}
\textsc{Simple EDP} can be solved in time $\bigoh(|P|^{\binom{k}{2}+1}(k+1)!)$.
\end{lemma}

\lv{\begin{proof}}
\sv{\begin{proof}[Sketch of Proof.]}
  Let $(G,P)$ with partition $A$ and $B$ and $k=|A|$ be an instance of
  \textsc{Simple EDP}. 
  Let the \emph{terminal graph} of $G$, denoted by $G^T$, as the graph with vertex set $V$ and edge set $P$.
  \sv{Our first course of action will be to simplify the instance by removing all vertices in $B$ that are not part of any terminal pair; to this end, we add multi-edges into $G[A]$ which represent removed degree-2 vertices. We now make the following two observations:
 } 
\lv{

  We will start by simplifying
  the instance using some simple observations. First we will show that
  we can remove all vertices in $B$ that are not contained in any
  terminal pair by adding multi-edges to $G[A]$. Namely, let $v$ be a
  vertex in $B$ that does not appear in any terminal pair in
  $P$. If $v$ has no neighbors or at most one neighbor, then
  $v$ can simply be removed from $G$, and if $v$ has degree two, then
  we can remove $v$ and add an edge between its two neighbors in $A$.
  Hence in the following we will assume that all vertices in $B$ occur
  in at least one terminal pair and that $G[A]$ can contain
  multi-edges.
  
  The following two observations will be crucial for our algorithm:
  }  
  \begin{itemize}
  \item[O1] 
    Consider a path $P$ connecting a terminal pair $p \in P$ in a
    solution. Because $B$ is an independent set and every vertex in $B$
    has degree at most two and is contained in at least one terminal pair in $P$, we obtain that
    all inner vertices of $P$ are from $A$. Hence, $P$ contains at most $k+2$ vertices and all inner vertices of $P$ are
    contained in $A$. It follows that $P$ is completely characterized
    by the sequence of vertices it uses in
    $A$. Consequently,
    there are at most $\sum_{\ell=1}^k\binom{k}{\ell}\ell!\leq (k+1)!$ different
    types of paths that
    need to be considered for the connection of any terminal pair.
  \item[O2] $G^T[B]$ is a disjoint union of paths and
    cycles. This is because every vertex $v$ of $G$ can be contained in
    at most $|N_G(v)|$ terminal pairs in $P$ (otherwise we immediately reject) and all vertices in $B$ have degree at most two.
  \end{itemize}
  Let $u$ and $v$ be two distinct vertices in $A$. Because $|A|\leq k$, we can
  enumerate all possible paths between $u$ and $v$ in $G[A]$ in time
  $\bigoh((k+1)!)$. We will represent each such path $H$ as a binary
  vector $E_H$, whose entries are indexed by all sets of two distinct
  vertices in $A$,
  such that $E_H[e]=1$ if $H$ uses the edge $e$ and $E_H[e]=0$
  otherwise. Moreover, we will denote by $E_{u,v}$ the set $\SB E_H
  \SM H\textup{ is a path between }u\textup{ and }v\textup{ in
  }G[A]\SE$; intuitively, $E_{u,v}$ captures all possible sets of edges that need to be used in order to connect $u$ to $v$.

  Let $S$ be a solution for $(G,P)$. The algorithm represents every
  solution $S$ for $(G,P)$ as a solution vector $E_S$ of natural numbers whose
  entries are indexed by all sets $\{u,v\}$ of two distinct vertices
  in $A$. More specifically, for two distinct vertices $u$ and $v$ in
  $A$, $E_S[\{u,v\}]$ is equal to the number of edges between $u$ and
  $v$ used by the paths in $S$. The algorithm uses dynamic programming to
  compute the set $\LLL$ of all solution vectors; clearly, $\LLL\neq \emptyset$ if and only if $(G,P)$ is a \textbf{YES}-instance. We compute $\LLL$ in two main steps:
  \begin{itemize}
  \item[(S1)] the algorithm computes the set $\LLL_A$ of all solution vectors
    for the sub-instance $(G[A],P')$ of $(G,P)$, where $P'$ is the
    subset of $P$ containing all terminal pairs $\{p,q\}$ with $p,q \in
    A$.
  \item[(S2)] the algorithm computes the set of all solution vectors
    for the sub-instance $(G,P \setminus P')$. Note that every
    terminal pair $p$ in $P \setminus P'$ is either completely
    contained in $B$, in which case it forms an edge of a path or
    acycle in $G^T[B]$, or $p$ has one vertex in $A$ and the other
    vertex in $B$, which is the endpoint of a path in $G^T[B]$. The
    algorithm now computes the set of all solution vectors for the 
    sub-instance $(G,P \setminus P')$ in two steps:
    \begin{itemize}
    \item[(S2A)] For every cycle $C$ in $G^T[B]$, the algorithm
      computes the set $\LLL_C$ of all solution vectors for the sub-instance $(G[A
      \cup V(C)], P_C)$, where $P_C$ is the subset of $P$ containing all terminal pairs $\{p,q\}$ such that $p,q \in C$. 
    \item[(S2B)] For every path $H$ in $G^T[B]$, the algorithm
      computes the set $\LLL_H$ of all solution vectors for the sub-instance $(G[A\cup
      V(H)],P_H)$, where $P_H$ is the subset of $P$ containing all terminal
      pairs $\{p,q\}$ with $\{p,q\} \cap V(H)\neq \emptyset$.
    \end{itemize}
  \end{itemize}
  In the end, the set of all \emph{hypothetical solution vectors} $\LLL'$ for $(G,P)$ is
  obtained as $\LLL_A\oplus (\oplus_{C \textup{ is a cycle of
    }G^T[B]}\LLL_C)\oplus (\oplus_{H \textup{ is a path of
    }G^T[B]}\LLL_H)$, where $\RR\oplus \RR'$ for two sets $\RR$ and
  $\RR'$ of
  solution vectors is equal to $\SB R+R' \SM R \in \RR \land R'\in
  \RR'\SE$. Each vector in $\LLL'$ describes one possible set of multi-edges in $G[A]$ that can be used to connect all terminal pairs in $P$. In order to compute $\LLL$, one simply needs to remove all vectors from $\LLL'$ which require more multi-edges than are available in $G[A]$; in particular, to obtain $\LLL$ we delete each $S$ from $\LLL'$ such that there exist $u,v\in A$ where $E_S[\{u,v\}]$ exceeds the number of multi-edges between $u$ and $v$ in $G$. The
  algorithm then returns \textbf{YES} if $\LLL$ is non-empty and otherwise the algorithm returns \textbf{NO}. 
  \lv{  Note that, as is usually the case with these types of dynamic
  programming algorithms, the algorithm can also be easily modified to find
  a solution for $(G,P)$, without increasing its running time.

      The set $\LLL_A$ described in step (S1) is computed as follows.
  Given an arbitrary but
  fixed ordering $p_1,\dotsc,p_{|P'|}$ of the terminal pairs in $P'$, let $P_i$ be the set
  $\SB p_j \SM 1\leq j \leq i\SE$, for every $i$ with $1 \leq
  i \leq |P'|$. The algorithm now uses dynamic programming to compute
  the sets $S_1,\dotsc,S_{|P'|}$, where $S_i$ contains the set of
  all hypothetical solution vectors for the instance $(G[A],P_i)$ as follows.
  The algorithm starts by setting
  $T_1$ to be the set $E_{p_1}$. Then for every $i$ with
  $1< i\leq |P'|$, the algorithm computes $T_i$ from $T_{i-1}$ as the set
  $\SB E+E' \SM E \in T_{i-1}\land E' \in E_{p_i}\SE$.

  The set $\LLL_C$ described in step (S2A) for a cycle $C=(v_1,\dotsc,v_n)$ of $G^T[B]$
  is computed as follows.
  The algorithm starts by computing a table
  $T_i$ for every $2 \leq i \leq n$, which for every $n_1 \in N_G(v_1)$
  and $n_2 \in N_G(v_2)$
  contains the set of all solution vectors for the instance $(G[A\cup
  \{v_1,\dotsc,v_i\}]-\{\{v_1,\bar{n}_1\},\{v_i,\bar{n}_i\}\},P_i)$,
  where $\{\bar{n}_1\}=N_G(v_1) \setminus \{n_1\}$,
  $\{\bar{n}_i\}=N_G(v_i) \setminus \{n_i\}$, and 
  $P_i=\SB \{v_j,v_{j+1}\}\SM 1 \leq j < i\SE$. The tables
  $T_2,\dotsc,T_n$ are iteratively computed starting with $T_2$ as
  follows. For every $n_1 \in N_G(v_1)$ and $n_2 \in N_G(v_2)$, the table
  $T_2[n_1,n_2]$ is equal to $E_{n_1,n_2}$. Moreover, for every $i$
  with $3\leq i \leq n$, the table $T_i$ is obtained from the table
  $T_{i-1}$ as follows. For every $n_1 \in N_G(v_1)$ and $n_i \in
  N_G(v_i)$, the table $T_i[n_1,n_i]$ is equal to the union of
  $\SB E+E' \SM E \in T_{i-1}[n_1,n_{i-1}] \land E' \in
  E_{\bar{n}_{i-1},n_i} \SE$ and
  $\SB E+E' \SM E \in T_{i-1}[n_1,\bar{n}_{i-1}] \land E' \in
  E_{n_{i-1},n_i} \SE$, where $\{\bar{n}_1\}=N_G(v_1) \setminus \{n_1\}$,
  $\{\bar{n}_i\}=N_G(v_i) \setminus \{n_i\}$, and $\{n_{i-1},\bar{n}_{i-1}\}=N_G(v_{i-1})$.
  Finally, the set of all hypothetical solution vectors for the instance $(G[A\cup
  C],P_C)$ is equal to $\SB E \in T[n_1,n_n]\SM n_1 \in N_G(v_1)
  \land n_n \in N_G(v_n) \SE$, where $T[n_1,n_n]=\SB E+E' \SM E \in
  T_i[n_1,n_n] \land E' \in E_{\bar{n}_n,\bar{n}_1}\SE$ for every $n_1
  \in N_G(v_1)$, $n_n \in N_G(v_n)$, $\{\bar{n}_1\}=N_G(v_1)
  \setminus \{n_1\}$, and $\{\bar{n}_n\}=N_G(v_n) \setminus \{n_n\}$.

  The set $\LLL_H$ described in step (S2B) for a path $H=(v_1,\dotsc,v_n)$ of $G^T[B]$
  is computed as follows.
  As for the case of
  a cycle the algorithm starts by computing the table $T_n$, which for
  every $n_1 \in N_G(v_1)$ and $n_n \in N_G(v_n)$ contains all
  solution vectors for the instance $(G[A\cup V(P)]-E^{1,n},E(P))$,
  where $E^{1,n}$ contains the edge from $v_1$ to $N_G(v_1)\setminus
  \{n_1\}$ if $N_G(v_1)\setminus \{n_1\}\neq \emptyset$ and the edge
  from $v_n$ to $N_G(v_n)\setminus \{n_n\}$ if $N_G(v_n)\setminus
  \{n_n\}\neq \emptyset$. Let $T$ be equal to $T_n$, then the
  algorithm proceeds as follows. If $P'$ contains a terminal pair
  $\{v_1,a\}$ with $a \in A$, then for every $n_1 \in N_G(v_1)$ and
  every $n_n \in N_G(v_n)$, the algorithm updates $T[n_1,n_n]$ to be
  the set 
  $\SB E+E' \SM E \in T_n[n_1,n_n] \land E' \in E_{\bar{n}_1,a}\SE$,
  where $\{\bar{n}_1\}=N_G(v_1)\setminus \{n_1\}$. Similarly, if $P'$
  contains a terminal pair
  $\{v_n,a\}$ with $a \in A$, then for every $n_1 \in N_G(v_1)$ and
  every $n_n \in N_G(v_n)$, the algorithm updates $T[n_1,n_n]$ to be
  the set 
  $\SB E+E' \SM E \in T_n[n_1,n_n] \land E' \in E_{\bar{n}_n,a}\SE$,
  where $\{\bar{n}_n\}=N_G(v_n)\setminus \{n_n\}$. Finally,  
  the set of all solution vectors $\LLL_H$ for the instance $(G[A\cup
  V(H)],P')$ is obtained as the set $\SB E \in T[n_1,n_n]\SM n_1 \in N_G(v_1)
  \land n_n \in N_G(v_n)\SE$.

  This completes the description of the algorithm. To verify correctness, one can observe that each solution vector computed by the algorithm can be traced back to a specific choice of edges (a path) that connects each terminal pair in $P$, and since there are sufficient multi-edges in $G[A]$ to accommodate all the resulting paths, this guarantees the existence of a solution. On the other hand, if a solution exists then it surely has a solution vector, and moreover the algorithm will discover this solution vector by choosing, for each $\{a,b\} \in P$, the entry in $E_H$ which corresponds to the $a$-$b$ path used in the solution.
  
    Finally, we argue the running time bound. Note first that every set of
  solution vectors computed at any point in the algorithm contains at
  most $|P|^{\binom{k}{2}}$ elements. Moreover, as argued in (O1) the
  set $E_{u,v}$ for two distinct vertices $u$ and $v$ in $A$ can be
  computed in time $\bigoh((k+1)!)$ and contains at most $(k+1)!$
  elements. From this it follows that the time required to compute
  $\LLL_A$ in (S1) is at most $\bigoh(|P|^{\binom{k}{2}}(k+1)!|P'|)$.
  Similarly, the time required to compute $\LLL_C$ for a cycle $C$ in
  $G^T[B]$ in step (S2A) is at most
  $\bigoh(|P|^{\binom{k}{2}}(k+1)!|P_C|)$ and the time required to compute
  $\LLL_H$ for a path $H$ in
  $G^T[B]$ in step (S2B) is at most
  $\bigoh(|P|^{\binom{k}{2}}(k+1)!|P_H|)$. Hence the time required to
  compute $\LLL_A$ together with all the sets $\LLL_C$ and $\LLL_H$
  for every cycle $C$ and path $H$ of $G^T[B]$ is at most 
  $\bigoh(|P|^{\binom{k}{2}}(k+1)!|P|)$. Finally, combining these sets
  into $\LLL'$ does not incur an additional run-time overhead since
  $\LLL'$ can be computed iteratively as part of the computation of the
  sets $\LLL_A$, $\LLL_C$, and $\LLL_H$.
  }
\end{proof}

Notice that Lemma~\ref{lem:simplealg} does not provide a fixed-parameter algorithm for \textsc{Simple EDP}. Our second task for this section will be to rule out the existence of such algorithms (hence also ruling out the fixed-parameter tractability of \EDP\ parameterized by treecut width).

Before we proceed, we would like note that this outcome was highly surprising for the authors. Indeed, not only does this ``break'' the parallel between $\{$\VDP, treewidth$\}$ and $\{$\EDP, treecut width$\}$, but inspecting the dynamic programming algorithm for \EDP\ parameterized by treecut width presented in Section~\ref{sec:tcwalg} reveals that solving \textsc{Simple EDP} is the only step which requires more than ``FPT-time''. In particular, if \textsc{Simple EDP} were \FPT, then \EDP\ parameterized by treecut width would also be \FPT. This situation contrasts the vast majority of dynamic programming algorithms for parameters such as treewidth and clique-width~\cite{CourcelleMR00}, where the complexity bottleneck is usually tied to the size of the records used and not to the computation of the dynamic step.

\sv{Our lower-bound result is based on a parameterized
reduction from the following problem, the \W{1}-hardness of which follows from recent work of the authors~\cite{GanianOrdyniakSridharan17,GanianKO18} ($\star$):
\pbDefP{\textsc{Multidimensional Subset Sum (MSS)}}{An
  integer $k$, a set $S=\{s_1,\dotsc,s_n\}$ of item-vectors with $s_i
  \in \Nat^{k}$ for every $i$ with $1\leq i \leq n$, a target vector
  $t \in \Nat^k$, and an integer $\ell$.}{$k$}{Is there a subset $S'
  \subseteq S$ with $|S'|\geq \ell$ such that $\sum_{s \in S'}s\leq t$?}

}
\lv{
Our lower-bound result is based on a parameterized
reduction from the following problem:
\pbDefP{\textsc{Multidimensional Subset Sum (MSS)}}{An
  integer $k$, a set $S=\{s_1,\dotsc,s_n\}$ of item-vectors with $s_i
  \in \Nat^{k}$ for every $i$ with $1\leq i \leq n$, a target vector
  $t \in \Nat^k$, and an integer $\ell$.}{$k$}{Is there a subset $S'
  \subseteq S$ with $|S'|\geq \ell$ such that $\sum_{s \in S'}s\leq t$?}
The \W{1}\hy hardness of MSS can be obtained by a trivial reduction from the following problem, which was recently shown to be \W{1}\hy hard by Ganian, Ordyniak and Ramanujan~\cite{GanianOrdyniakSridharan17}:
\pbDefP{\textsc{Multidimensional Relaxed Subset Sum (MRSS)}}{An
  integer $k$, a set $S=\{s_1,\dotsc,s_n\}$ of item-vectors with $s_i
  \in \Nat^{k}$ for every $i$ with $1\leq i \leq n$, a target vector
  $t \in \Nat^k$, and an integer $\ell$.}{$k$}{Is there a subset $S'
  \subseteq S$ with $|S'|\leq \ell$ such that $\sum_{s \in S'}s\geq t$?}
Indeed, given an instance $(k,S,t,\ell)$ of MRSS, it is straightforward to
verify that $(k,S,(\sum_{s \in S}s)-t,|S|-\ell)$ is an
equivalent instance of MSS; since the reduction preserves the
parameter, this shows that MSS is also \W{1}\hy hard. 
}

\lv{\begin{lemma}}
\sv{\begin{lemma}[$\star$]}
\label{lem:simplehard}
\textsc{Simple EDP} is \W{1}\hy hard.
\end{lemma}
\lv{\begin{proof}}
\sv{\begin{proof}[Sketch of Proof.]}
  We provide a parameterized reduction from MSS. Namely, given an
  instance $(k,S,t,\ell)$ of MSS, we will construct an equivalent instance $(G,P)$
  with partition $A$ and $B$ and $|A|=k+3$ of \textsc{Simple EDP}. For
  convenience and w.l.o.g.\ we will assume that all entries of the
  vectors in $S$ as well as all entries of the target vector $t$ are
  divisible by two; furthermore, we will describe the constructed instance of \textsc{Simple EDP} with multi-edges between vertices in $A$ (note that these can be replaced by degree-2 vertices in $B$, similarly as in Lemma~\ref{lem:simplealg}).
  
  \lv{The graph $G[A]$ has vertices $a$, $b$, $d$, and $d_1,\dotsc,d_k$
  and the following multi-edges:
  \begin{itemize}
  \item $|S|-\ell$ edges between $a$ and $b$,
  \item for every $i$ with $1 \leq i \leq k$, $t[i]$ edges between $d$
    and $d_i$.
  \end{itemize}}\sv{The graph $G[A]$ has vertices $a$, $b$, $d$, and $d_1,\dotsc,d_k$
  and the following multi-edges: (1) $|S|-\ell$ edges between $a$ and $b$, and
  (2) for every $i$ with $1 \leq i \leq k$, $t[i]$ edges between $d$ and $d_i$.}
  Moreover, for every $s \in S$ we construct a gadget $G(s)$
  consisting of:
  \begin{itemize}
  \item the vertices $v^s,v^s_1,u^s_1,\dotsc,v^s_{\bar{s}},u^s_{\bar{s}}$ with
    $\bar{s}=\sum_{i=1}^k s[i]$,
  \item two edges $\{v^s,a\}$ and $\{v^s,d\}$, 
  \item for every $i$ with $1 \leq i \leq \bar{s}$, two edges
    $\{v_i^s,b\}$ and $\{u_i^s,b\}$,
  \item for every $i$ with $1\leq i \leq \bar{s}$ and $i$ even, two
    edges $\{v^s_i,d\}$ and $\{u^s_i,d\}$,
  \item for every $j$ with $1 \leq j \leq k$ and every $i$ with
    $\sum_{l=1}^{j-1}s[l] < i \leq \sum_{l=1}^{j}s[l]$ and $i$ odd,
    two edges $\{v^s_i,d_j\}$ and $\{u^s_i,d_j\}$,
  \item the terminal pair $\{v^s,v^s_1\}$,
  \item for every $i$ with $1 \leq i \leq \bar{s}$, a
    terminal pair $\{v^s_i,u^s_i\}$,
  \item for every $i$ with $1 \leq i < \bar{s}$, a
    terminal pair $\{u^s_i,v^s_{i+1}\}$,
  \end{itemize}

  \begin{figure}[ht]
    \begin{center}
    \vspace{-0.3cm}
      \begin{tikzpicture}[node distance=1cm]
        \tikzstyle{every node}=[]
        \tikzstyle{gn}=[draw, shape=circle, minimum size=3pt,inner
        sep=2pt, fill=white]
        \tikzstyle{ge}=[draw, line width=2pt]
        \tikzstyle{geg}=[draw, line width=0pt]

        \tikzstyle{gec1}=[color=green]
        \tikzstyle{gec2}=[]
        
        \tikzstyle{gegc1}=[color=green]
        \tikzstyle{gegc2}=[]

        \draw
        node[gn, label=above:$a$] (a) {}
        node[gn, node distance=2.5cm, below of=a, label=below:$b$] (b) {}

        node[node distance=1.5cm, right of=a] (vsh) {}
        node[gn, node distance=1.5cm, above of=vsh,label=above:$v^s$] (vs) {}
        node[gn, below of=vs, label=above:$v^s_1$] (vs1) {}
        node[gn, below of=vs1, label=above:$u^s_1$] (us1) {}
        
        node[gn, below of=us1, label=above:$v^s_2$] (vs2) {}
        node[gn, below of=vs2, label=above:$u^s_2$] (us2) {}

        node[gn, below of=us2, label=above:$v^s_3$] (vs3) {}
        node[gn, below of=vs3, label=above:$u^s_3$] (us3) {}

        node[gn, below of=us3, label=above:$v^s_4$] (vs4) {}
        node[gn, below of=vs4, label=above:$u^s_4$] (us4) {}

        node[node distance=1.5cm, node distance=0.5cm, below of=vs1] (d1h){}
        node[gn, node distance=1.5cm, right of=d1h, label=above:$d_1$] (d1) {}
        node[node distance=3cm, below of=d1] (dd) {}
        node[gn, node distance=2.5cm, below of=dd, label=below:$d_2$] (d2) {}
        node[gn, node distance=1.5cm, right of=dd, label=right:$d$] (d) {}
        ;
        \draw[ge,gec1]
        (a) -- (b) node[midway, left=5pt,color=black] {$|S|-\ell$}
        ;
        \draw[ge,gec2]
        (d) -- (d1) node[midway, right,color=black] {$t[1]$}
        (d) -- (d2) node[midway, right,color=black] {$t[2]$}
        ;
        \draw
        (vs) edge[geg, gegc2, bend left=45] (d)
        (vs) edge[geg, gegc1] (a)
        
        (vs1) edge[geg,gegc1, gegc2] (b)
        (us1) edge[geg,gegc1, gegc2] (b)
        (vs2) edge[geg,gegc1, gegc2] (b)
        (us2) edge[geg,gegc1, gegc2] (b)
        (vs3) edge[geg,gegc1, gegc2] (b)
        (us3) edge[geg,gegc1, gegc2] (b)
        (vs4) edge[geg,gegc1, gegc2] (b)
        (us4) edge[geg] (b)

        (vs2) edge[geg,gegc1, gegc2] (d)
        (us2) edge[geg,gegc1, gegc2] (d)
        (vs4) edge[geg,gegc1, gegc2] (d)
        (us4) edge[geg,gegc1, gegc2] (d)

        (vs1) edge[geg,gegc1, gegc2] (d1)
        (us1) edge[geg,gegc1, gegc2] (d1)
        (vs3) edge[geg,gegc1, gegc2] (d2)
        (us3) edge[geg,gegc1, gegc2] (d2)
        
        ;
        
      \end{tikzpicture}
      \begin{tikzpicture}[node distance=1cm]
        \tikzstyle{every node}=[]
        \tikzstyle{gn}=[draw, shape=circle, minimum size=3pt,inner
        sep=2pt, fill=white]
        \tikzstyle{ge}=[draw, line width=2pt]
        \tikzstyle{geg}=[draw, line width=0pt]

        \tikzstyle{gec1}=[]
        \tikzstyle{gec2}=[color=green]
        
        \tikzstyle{gegc1}=[]
        \tikzstyle{gegc2}=[color=green]

        \draw
        node[gn, label=above:$a$] (a) {}
        node[gn, node distance=2.5cm, below of=a, label=below:$b$] (b) {}

        node[node distance=1.5cm, right of=a] (vsh) {}
        node[gn, node distance=1.5cm, above of=vsh,label=above:$v^s$] (vs) {}
        node[gn, below of=vs, label=above:$v^s_1$] (vs1) {}
        node[gn, below of=vs1, label=above:$u^s_1$] (us1) {}
        
        node[gn, below of=us1, label=above:$v^s_2$] (vs2) {}
        node[gn, below of=vs2, label=above:$u^s_2$] (us2) {}

        node[gn, below of=us2, label=above:$v^s_3$] (vs3) {}
        node[gn, below of=vs3, label=above:$u^s_3$] (us3) {}

        node[gn, below of=us3, label=above:$v^s_4$] (vs4) {}
        node[gn, below of=vs4, label=above:$u^s_4$] (us4) {}

        node[node distance=1.5cm, node distance=0.5cm, below of=vs1] (d1h){}
        node[gn, node distance=1.5cm, right of=d1h, label=above:$d_1$] (d1) {}
        node[node distance=3cm, below of=d1] (dd) {}
        node[gn, node distance=2.5cm, below of=dd, label=below:$d_2$] (d2) {}
        node[gn, node distance=1.5cm, right of=dd, label=right:$d$] (d) {}
        ;
        \draw[ge,gec1]
        (a) -- (b) node[midway, left=5pt,color=black] {$|S|-\ell$}
        ;
        \draw[ge,gec2]
        (d) -- (d1) node[midway, right,color=black] {$t[1]$}
        (d) -- (d2) node[midway, right,color=black] {$t[2]$}
        ;
        \draw
        (vs) edge[geg, gegc2, bend left=45] (d)
        (vs) edge[geg, gegc1] (a)
        
        (vs1) edge[geg,gegc1, gegc2] (b)
        (us1) edge[geg,gegc1, gegc2] (b)
        (vs2) edge[geg,gegc1, gegc2] (b)
        (us2) edge[geg,gegc1, gegc2] (b)
        (vs3) edge[geg,gegc1, gegc2] (b)
        (us3) edge[geg,gegc1, gegc2] (b)
        (vs4) edge[geg,gegc1, gegc2] (b)
        (us4) edge[geg,gegc1, gegc2] (b)

        (vs2) edge[geg,gegc1, gegc2] (d)
        (us2) edge[geg,gegc1, gegc2] (d)
        (vs4) edge[geg,gegc1, gegc2] (d)
        (us4) edge[geg] (d)

        (vs1) edge[geg,gegc1, gegc2] (d1)
        (us1) edge[geg,gegc1, gegc2] (d1)
        (vs3) edge[geg,gegc1, gegc2] (d2)
        (us3) edge[geg,gegc1, gegc2] (d2)
        
        ;
        
      \end{tikzpicture}
    \end{center}
    \vspace{-0.4cm}
\sv{\caption{An illustration of the graph $G[A]$ together with the gadget $G(s)$ for $k=2$, $s[1]=2$, and $s[2]=2$. Bold edges indicate multi-edges with multiplicities given as a label. The left side depicts (in green) a solution which uses the edge $\{a,b\}$ (corresponding to not ``picking'' $s$), while the right side depicts a solution which uses edges incident to $d$ (corresponding to ``picking'' $s$).
}}
    \lv{\caption{An illustration of the graph $G[A]$ together with the gadget $G(s)$ for $k=2$, $s[1]=2$, and $s[2]=2$. Bold edges indicate multi-edges with multiplicities given as an edge label. The left side illustrates configuration (C1) and the right side illustrates configuration (C2) as defined in Claim~\ref{clm:simple-edp-hard}; here the green edges indicate the edges used by a solution that uses the corresponding configuration to connect the terminal pairs of $G(s)$.}}
    \vspace{-0.3cm}
    \label{fig:simple-edp-hard}
  \end{figure}

  Then $G$ consists of the graph $G[A]$ together with the vertices and
  edges of the gadget $G(s)$ for every $s \in S$; note that $B$ is the
  union of the vertices of the gadgets $G(s)$ for every $s \in
  S$. Moreover, $P$ consists of all terminal pairs of the gadgets
  $G(s)$ for every $s \in S$.
  This completes the construction of the instance $(G,P)$; an illustration is
  provided in Figure~\ref{fig:simple-edp-hard}. 
  \sv{
   Intuitively, each gadget $G(s)$ forces us to make a choice: either connect all terminal pairs in the gadget by using an $\{a,b\}$ edge once (in this case no other edges outside of $G(s)$ are necessary), or avoid the edge $\{a,b\}$ (in which case we need to use precisely $s[i]$-many edges between each $d_i$ and $d$). Since at least $\ell$ gadgets need to be routed without using an $\{a,b\}$ edge, it follows that   $(k,S,t,\ell)$ has a solution if and only if the resulting instance of \EDP\ has a solution. 
    }  
  \lv{
  It remains to show that
  the instance $(k,S,t,\ell)$ of MSS has a solution if and only if so
  does the instance $(G,P)$ of EDP. 

  We start by showing that there are only two ways to connect all
  terminal pairs of the gadget $G(s)$ for every $s \in
  S$. Figure~\ref{fig:simple-edp-hard} illustrates the edges
  used by the two configurations.
  \begin{CLM}\label{clm:simple-edp-hard}
    Let $\SSS$ be a solution for $(G,P)$, and $s \in S$. Then either:
    \begin{itemize}
    \item[(C1)] The terminal pair $\{v^s,v^s_1\}$ is connected by the path
      $(v^s,a,b,v^s_1)$ and:
      \begin{itemize}
      \item for every $i$ with $1\leq i < \bar{s}$, the
        terminal pair $\{u^s_i,v^s_{i+1}\}$ is connected by the path
        $(u^s_i,b,v^s_{i+1})$,
      \item for every $i$ with $1\leq i \leq \bar{s}$ and $i$ even,
        the terminal pair $\{v^s_i,u^s_i\}$ is connected by the path
        $(v^s_i,d,u^s_i)$, and 
      \item for every $i$ with $1\leq i \leq \bar{s}$ and $i$ odd,
        the terminal pair $\{v^s_i,u^s_i\}$ is connected by the path
        $(v^s_i,d_j,u^s_i)$, where $j$ is such that
        $\sum_{l=1}^{j-1}s_l <i\leq\sum_{l=1}^{j}$.
      \end{itemize}
    \item[(C2)] The terminal pair $\{v^s,v^s_1\}$ is connected by the path
      $(v^s,d,d_j,v^s_1)$, where $j$ is the minimum integer such that
      $s[j]\neq 0$ and:
      \begin{itemize}
      \item for every $i$ with $1\leq i \leq \bar{s}$, the
        terminal pair $\{v^s_i,u^s_{i}\}$ is connected by the path
        $(v^s_i,b,u^s_{i})$,
      \item for every $i$ with $1\leq i < \bar{s}$ and $i$ is odd, the
        terminal pair $\{u^s_i,v^s_{i+1}\}$ is connected by the path
        $(u^s_i,d_j,d,v^s_{i+1})$, where $j$ is such that
        $\sum_{l=1}^{j-1}s[l] < i \leq \sum_{l=1}^js[l]$,
      \item for every $i$ with $1\leq i < \bar{s}$ and $i$ is even, the
        terminal pair $\{u^s_i,v^s_{i+1}\}$ is connected by the path
        $(u^s_i,d,d_j,v^s_{i+1})$, where $j$ is such that
        $\sum_{l=1}^{j-1}s[l] < i \leq \sum_{l=1}^js[l]$.
      \end{itemize}
    \end{itemize}
  \end{CLM}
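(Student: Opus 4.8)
The plan is to prove Claim~\ref{clm:simple-edp-hard} by a careful case analysis of how the terminal pairs inside a single gadget $G(s)$ can be routed, exploiting the fact that every vertex $v^s_i$, $u^s_i$ has degree exactly $2$ in $G$ (it is a $B$-vertex), so every path in a solution that passes \emph{through} such a vertex is forced to use both of its incident edges. First I would observe that each of the vertices $v^s_i$ and $u^s_i$ has exactly two incident edges and is the endpoint of exactly two terminal pairs in the ``chain'' $\{v^s,v^s_1\},\{v^s_1,u^s_1\},\{u^s_1,v^s_2\},\dots,\{u^s_{\bar s-1},v^s_{\bar s}\},\{v^s_{\bar s},u^s_{\bar s}\}$; hence a solution must use \emph{all} edges incident to these gadget vertices, and the two edges at each such vertex are split between the two terminal pairs meeting there — one edge ``up the chain'' and one ``down the chain''. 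Consequently the entire chain of terminal pairs inside $G(s)$ is routed by an alternating walk that leaves and enters the gadget only at $a$, $b$, $d$, and the $d_j$'s.

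Next I would analyze the first terminal pair $\{v^s,v^s_1\}$. The vertex $v^s$ has exactly two edges, $\{v^s,a\}$ and $\{v^s,d\}$, and is an endpoint of only the single pair $\{v^s,v^s_1\}$, so the connecting path starts with either the edge $\{v^s,a\}$ or the edge $\{v^s,d\}$. If it starts with $\{v^s,a\}$: the path must reach $v^s_1$, whose other neighbor is $b$ (when $1\le \bar s$ and index $1$ is odd, so $v^s_1$ is attached to some $d_j$ — wait, carefully: $v^s_1$ has neighbors $b$ and $d_1$ since $i=1$ is odd). I would track through which of $v^s_1$'s two edges is consumed by $\{v^s,v^s_1\}$ and which by $\{v^s_1,u^s_1\}$; each choice propagates deterministically down the chain, because at each gadget vertex the ``incoming'' edge from the previous pair is fixed, forcing the ``outgoing'' edge, and the next pair must then use the remaining edge. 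This bookkeeping is exactly what produces the two global configurations: the branch where $\{v^s,v^s_1\}$ uses $\{v^s,a\}$ unwinds into configuration (C1) (the even-indexed pairs $\{v^s_i,u^s_i\}$ route through $d$, the odd-indexed ones through the appropriate $d_j$, and the ``linking'' pairs $\{u^s_i,v^s_{i+1}\}$ route through $b$), and the branch where it uses $\{v^s,d\}$ unwinds into configuration (C2) (the $\{v^s_i,u^s_i\}$ pairs route through $b$, and the linking pairs route through $d$ and the $d_j$'s). I would verify that the parity conditions in the statement (``$i$ even'', ``$i$ odd'', the index $j$ with $\sum_{l=1}^{j-1}s[l]<i\le\sum_{l=1}^{j}s[l]$) are precisely dictated by which of $b$, $d$, or $d_j$ each gadget vertex is attached to, so the forced propagation has no freedom beyond the initial binary choice.

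The main obstacle I anticipate is ruling out ``crossing'' or ``non-local'' routings: a priori, a path connecting a pair inside $G(s)$ could leave the gadget, wander through $G[A]$ or even through another gadget $G(s')$, and come back. I would handle this by noting that $G[A]$ together with the external endpoints $a,b,d,d_1,\dots,d_k$ forms the only interface, and that any path entering $G(s)$ at one of these interface vertices and using an internal gadget vertex is immediately pinned by the degree-$2$ constraint; in particular a path cannot ``pass through'' $a$, $b$, $d$, or $d_j$ into the internal part and out again without consuming one of the scarce multi-edges, and the counting of those multi-edges ($|S|-\ell$ copies of $\{a,b\}$, $t[i]$ copies of $\{d,d_i\}$) is exactly the global budget the reduction relies on. So locally, \emph{within the proof of the claim}, it suffices to observe that every edge incident to an internal vertex of $G(s)$ must be used (as argued above) and that these edges already account completely for the routing of the chain, leaving no edge of $G(s)$ free for any foreign path; hence every terminal pair of $G(s)$ is routed entirely inside $G(s)\cup\{a,b,d,d_1,\dots,d_k\}$, and the two enumerated configurations are the only possibilities. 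Finally I would double-check the boundary behavior at $v^s_{\bar s}$/$u^s_{\bar s}$ (the last link), where the chain ``closes'' with the pair $\{v^s_{\bar s},u^s_{\bar s}\}$ rather than continuing, to confirm both configurations terminate consistently.
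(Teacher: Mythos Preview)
Your proposal is correct and follows essentially the same approach as the paper: a binary case split on which of the two edges at $v^s$ the path for $\{v^s,v^s_1\}$ uses, followed by deterministic propagation down the chain using the degree-$2$ constraint at each $v^s_i,u^s_i$. The paper's proof is terser---it simply asserts the two possible paths for $\{v^s,v^s_1\}$ and then says ``the only way to connect'' each subsequent pair is the stated one, without spelling out the degree/terminal-count bookkeeping or the exclusion of non-local routings that you make explicit; your additional remarks on why foreign paths cannot pass through gadget vertices and on the boundary case at $u^s_{\bar s}$ are sound and fill in details the paper leaves to the reader.
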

  \begin{proof}
    Let $\SSS$ be a solution for $(G,P)$ and $s \in G(s)$. Then $\SSS$ has
    to connect the terminal pair $\{v^s,v^s_1\}$ either by the path
    $(v^s,a,b,v^s_1)$ or by the path $(v^s,d,d_j,v^s_1)$.

    In the
    former case, the only way to connect the terminal pair
    $\{v^s_1,u^s_1\}$ is the path $(v^s_1,d_j,u^s_1)$, where $j$ is
    such that $\sum_{l=1}^{j-1}s_l <1\leq\sum_{l=1}^{j}$. But then the
    terminal pair $\{u^s_1,v^s_2\}$ can only be connected by the path
    $(u^s_1,b,v^s_2)$ and in turn the terminal pair $\{v^s_2,u^s_2\}$
    can only be connected by the path $(v^s_2,d,u^s_2)$. Since this
    pattern continues in this manner, this concludes the
    argument for the first case.

    In the later case, the only way to connect the terminal pair
    $\{v^s_1,u^s_1\}$ is the path $(v^s_1,b,u^s_1)$. But then the
    terminal pair $\{u^s_1,v^s_2\}$ can only be connected by the path
    $(u^s_1,d_j,d,v^s_2)$, where $j$ is
    such that $\sum_{l=1}^{j-1}s_l <1\leq\sum_{l=1}^{j}$, and in turn the terminal pair $\{v^s_2,u^s_2\}$
    can only be connected by the path $(v^s_2,b,u^s_2)$. Finally, the
    terminal pair $\{u^s_2,v^s_3\}$ can then only be connected by the path
    $(u^s_2,d,d_j,v^s_3)$, where $j$ is
    such that $\sum_{l=1}^{j-1}s_l <1\leq\sum_{l=1}^{j}$. Since this
    pattern continues in this manner, this concludes the
    argument for the second case.
  \end{proof}
  Let $\SSS$ be a solution for $(G,P)$ and $s \in S$. It follows from
  Claim~\ref{clm:simple-edp-hard} that if $\SSS$ connects the terminal
  pairs of $G(s)$ according to (C1), then the only edge used
  from $G[A]$ is the edge $\{a,b\}$. On the other hand, if $\SSS$ connects the
  terminal pairs in $G(s)$ according to (C2), then $\SSS$ uses 
  $s[i]$ edges between $d$ and $d_j$ for every $i$ with $1\leq i \leq k$.

  Towards showing the forward direction, let $S' \subseteq S$ be a solution for $(k,S,t,\ell)$. 
  W.l.o.g. we can assume that $|S'|=\ell$. We claim that the set of
  edges disjoint paths $\SSS$, which if $s \in S'$ connects all
  terminal pairs in $G(s)$ according to (C2) and if $s \in S\setminus
  S'$ connects all terminal pairs in $G(s)$ according to (C1) is a
  solution for $(G,P)$. This holds because there are $|S|-\ell$ edges
  between $a$ and $b$, which are sufficient for the elements in
  $S\setminus S'$ to be connected according to (C1). Moreover, because
  $\sum_{s\in S'}s\leq t$, the $t[i]$ edges between $d$ and $d_i$ for
  every $i$ with $1 \leq i \leq k$, suffices for the elements in $S'$
  to be connected according to (C2).

  For the reverse direction, let
  $\SSS$ be a solution for $(G,P)$.
  We claim that the subset $S'$ of $S$ containing all $s \in S$ such
  that $\SSS$ connects all terminal pairs in $G(s)$ according to C2 is
  a solution for $(k,S,t,\ell)$. Because there are at most $|S|-\ell$
  edges between $a$ and $b$ in $G[A]$, we obtain that $|S'|\geq \ell$.
  Moreover, because there are at most $t[i]$ edges between $d$ and
  $d_i$ in $G[A]$, it follows that $\sum_{s\in S'}s\leq t$.
  Consequently, $S'$ is a solution for $(k,S,t,\ell)$.}
\end{proof}

\section{An algorithm for \EDP\ for graphs of bounded treecut width}
\label{sec:tcwalg}
The goal of this section is to provide an \XP\ algorithm for
\textsc{EDP} parameterized by treecut-width. The core of the algorithm
is a dynamic programming procedure which runs on a nice treecut
decomposition $ (T,\mathcal{X}) $ of the input graph $G$.

\lv{\subsection{Overview}}
\sv{\smallskip
\noindent \textbf{Overview.}\quad}
Our first aim is to define the data table the algorithm is going to dynamically compute for individual nodes of the treecut decomposition; to this end, we introduce two additional notions. For a node $t$, we say that $Y_t$ (or $G_t$) contains an \emph{unmatched} terminal $s$ if $\{s,t\}\in P$, $s\in Y_t$ and $t\not \in Y_t$; let $U_t$ be the multiset containing all unmatched terminals $Y_t$ (one entry in $U_t$ per tuple in $P$ which contains an unmatched terminal). For a subgraph $H$ of $G$, let $P_H\subseteq P$ denote the subset of terminal pairs whose both endpoints lie in $H$.

Let a \emph{record} for node $t$ be a tuple $(\delta, I, F, L)$ where:
\begin{itemize}
\item $\delta$ is a partitioning of $E_t$ into \emph{internal} ($I'$), \emph{leaving} ($L'$), \emph{foreign} ($F'$) and \emph{unused} $(U')$;
\item $I$ is a set of subsets of size $2$, which forms a perfect matching between the edges in $I'$;
\item $F$ is a set of subsets of size $2$, which forms a perfect matching between the edges in $F'$;
\item $L$ is a perfect matching between $U_t$ and the edges in $L'$.
\end{itemize}

Intuitively, a record captures all the information we need about one possible interaction between a solution to \textsc{EDP} and the edges in $E_t$. In particular, unmatched terminals need to cross between $Y_t$ and $G_t$ using an edge in $E_t$ and $L$ captures the first edge used by a path from an unmatched terminal in the solution, while $I$ and $F$ capture information about paths which intersect with $E_t$ but whose terminals both lie in $Y_t$ and $V(G_t)\setminus Y_t$, respectively. We formalize this intuition below through the notion of a \emph{valid record}.

\begin{definition}
\label{def:valid}
A record $\lambda=(\delta, I, F, L)$ is \emph{valid} for $t$ if $(G^\lambda,P^\lambda)$ is a \textbf{YES}-instance of \textsc{EDP}, where $(G^\lambda,P^\lambda)$ is constructed from $(G_t, P_{G_t})$ as follows:
\begin{enumerate}
\item For each $\{\{a,b\},\{c,d\}\}\in I$ where $a,c\in Y_t$, add a new vertex into $G_t$ and connect it to $a$ and $c$ by edges (note that if $a=c$ then this simply creates a new leaf and hence this operation can be ignored).
\item For each $\{s,\{a,b\}\}\in L$ where $a\in Y_t$, add a new tuple $\{s,t'\}$ into $P_{G_t}$ and a new leaf $t'$ into $G_t$ adjacent to $a$.
\item For each $\{\{a,b\},\{c,d\}\}\in F$ where $a,c\in Y_t$, add two new leaves $b', d'$ into $G_t$, make them adjacent to $a$ and $c$ respectively, and add $\{b',d'\}$ into $P_{G_t}$.
\end{enumerate}
\end{definition}

We are now ready to define our data tables: for a node $t\in V(T)$, let $D(t)$ be the set of all valid records for $t$. We now make two observations. First, for any node $t$ in a nice treecut decomposition of width $k$, it holds that there exist at most $4^k\cdot k!$ distinct records and hence $|D(t)|\leq 4^k\cdot k!$; indeed, there are $4^k$ possible choices for $\delta$, and for each such choice and each edge $e$ in $E_t$ one has at most $k$ options of what to match with $e$. Second, if $r$ is the root of $T$, then either $D(r)=\emptyset$ or $D(r)=\{(\emptyset, \emptyset, \emptyset, \emptyset)\}$; furthermore, $(G,P)$ is a \textbf{YES}-instance if and only if the latter holds. Hence it suffices to compute $D(r)$ in order to solve \textsc{EDP}. 

The next lemma shows that $D(t)$ can be computed efficiently for all leaves of $t$.

\lv{\begin{lemma}}
\sv{\begin{lemma}[$\star$]}
\label{lem:tcwleaves}
There is an algorithm which takes as input $(G,P)$, a width-$k$ treecut decomposition $(T,\mathcal{X})$ of $G$ and a leaf $t\in V(T)$, runs in time $k^{\bigoh(k^2)}$, and outputs $D(t)$.
\end{lemma}

\lv{
\begin{proof}
We proceed as follows. For each record $\lambda$ for $t$, we construct the instance $(G^\lambda,P^\lambda)$ as per Definition~\ref{def:valid} and check whether $(G^\lambda,P^\lambda)$ is a \textbf{YES}-instance of \textsc{EDP}. Since $V(G^\lambda)\leq 2k$, a simple brute-force algorithm will suffice here. For instance, one can enumerate all partitions of the at most $4k^2$ edges in $G^\lambda$, and for each such partition one can check whether this represents a set of edge disjoint paths which forms a solution to $(G^\lambda,P^\lambda)$.
If $(G^\lambda,P^\lambda)$ is a \textbf{YES}-instance of \textsc{EDP} then we add $\lambda$ into $D(t)$, and otherwise we do not.

The number of partitions of a set of size $4k^2$ is upper-bounded by $k^{\bigoh(k^2)}$~\cite{BerendTassa10}, and $|D(t)|\leq 4^k\cdot k!$. Hence the runtime of the whole algorithm described above is dominated by $k^{\bigoh(k^2)}$.
\end{proof}
}

At this point, all that is left to obtain a dynamic leaves-to-root algorithm which solves \textsc{EDP} is the dynamic step, i.e., computing the data table for a node $t\in V(t)$ from the data tables of its children. Unfortunately, that is where all the difficulty of the problem lies, and our first step towards handling this task will be the introduction of two additional notions related to records. The first is \emph{correspondence}, which allows us to associate each solution to $(G,P)$ with a specific record for $t$; on an intuitive level, a solution corresponds to a particular record if that record precisely captures the ``behavior'' of that solution on $E_t$. Correspondence will, among others, be used to later argue the correctness of our algorithm.

\begin{definition}
\label{def:correspondence}
A solution $\mathcal{S}$ to $(G,P)$ \emph{corresponds} to a record $\lambda=(\delta, I, F, L)$ for $t$ if the conditions \textbf{1.}-\textbf{4.} stated below hold for every $a$-$b$ path $S\in \mathcal{S}$ such that $S\cap E_t\neq \emptyset$. We let $s=|S\cap E_t|$ and we denote individual edges in $S\cap E_t$ by $e_1,e_2,\dots e_{s}$, ordered from the edge nearest to $a$ along $S$.
\begin{enumerate}
\item If $a,b\not \in Y_t$, then for each odd $i\in [s]$, $F$ contains $(e_i,e_{i+1})$.
\item If $a,b \in Y_t$, then for each odd $i\in [s]$, $I$ contains $(e_i,e_{i+1})$.
\item If $\{a,b\} \cap Y_t=\{a\}$, then $L$ contains $(a,e_1)$, and for each even $i\in [s]$ $F$ contains $(e_{i},e_{i+1})$.
\item There are no elements in $I,F,L$ other than those specified above.
\end{enumerate}
\end{definition}

Note that ``restricting'' the solution $\mathcal{S}$ to the instance $(G^\lambda, P^\lambda)$ used in Definition~\ref{def:valid} yields also a solution to $(G^\lambda, P^\lambda)$; in particular, for each path $S\in \mathcal{S}$ that intersects $E_t$, one replaces the path segments of $S$ in $G\setminus Y_t$ by the newly created vertices to obtain a solution to $(G^\lambda, P^\lambda)$. 
Consequently, if $\mathcal{S}$ corresponds to $\lambda$ then $\lambda$ must be valid (however, it is clearly not true that every valid record has a solution to the whole instance that corresponds to it). Moreover, since Definition~\ref{def:correspondence} is constructive and deterministic, for each solution $\mathcal{S}$ and node $t$ there exists precisely one corresponding valid record $\lambda$. 

The second notion that we will need is that of \emph{simplification}. This is an operation which takes a valid record $\lambda$ for a node $t$ and replaces $G_t$ by a ``small representative'' so that the resulting graph retains the existence of a solution corresponding to $\lambda$. Simplification can also be seen as being complementary to the construction of $(G^\lambda,P^\lambda)$ used in Definition~\ref{def:valid} (instead of modeling the implications of a record on $G_t$, we model its implications on $G-Y_t$), and will later form an integral part of our procedure for computing valid records for nodes.

\begin{definition}
\label{def:simplification}
\sv{
The \emph{simplification} of a node $t$ in accordance with $\lambda=(\delta, I, F, L)$ is an operation which transforms the instance $(G,P)$ into a new instance $(G',P')$ obtained from $(G-Y_t,P_{G-Y_t})$ as follows. (1) For each $\{s,\{a,b\}\}\in L$ where $(s,t)\in P$ and $b\not \in Y_t$, we add $(s,t)$ to $P'$ and create a vertex $s$ adjacent to $b$. (2) For each $\{\{a,b\},\{c,d\}\}\in I$ where $a,c\in Y_t$ and $a\neq c$, we add vertices $a$ and $c$ into $G'$ and make them adjacent to $b$ and $d$ respectively, and add $(a,c)$ into $P'$. (3) For each $\{\{a,b\},\{c,d\}\}\in F$ where $a,c\in Y_t$ and $b\neq d$, we create a vertex $e$ and set $N(e)=\{b,d\}$.
}
\lv{
The \emph{simplification} of a node $t$ in accordance with $\lambda=(\delta, I, F, L)$ is an operation which transforms the instance $(G,P)$ into a new instance $(G',P')$ obtained from $(G-Y_t,P_{G-Y_t})$ as follows:
\begin{itemize}
\item For each $\{s,\{a,b\}\}\in L$ where $(s,t)\in P$ and $b\not \in Y_t$, add $(s,t)$ to $P'$ and create a vertex $s$ adjacent to $b$.
\item For each $\{\{a,b\},\{c,d\}\}\in I$ where $a,c\in Y_t$ and $a\neq c$, add vertices $a$ and $c$ into $G'$ and make them adjacent to $b$ and $d$ respectively, and add $(a,c)$ into $P'$.
\item For each $\{\{a,b\},\{c,d\}\}\in F$ where $a,c\in Y_t$ and $b\neq d$, create a vertex $e$ and set $N(e)=\{b,d\}$.
\end{itemize}
}
\end{definition}

\lv{
With regards to simplification, observe that every vertex added to $G-Y_t$ has degree at most $2$ and that simplification can never increase the degree of vertices in $G-Y_t$.
}

\lv{\begin{observation}}
\sv{\begin{observation}[$\star$]}
\label{obs:simplification}
If there exists a solution to $(G,P)$ which \emph{corresponds} to a record $\lambda=(\delta, I, F, L)$ for $t$, and if $(G',P')$ is the result of simplification of $t$ in accordance with $\lambda$, then $(G',P')$ admits a solution. On the other hand, if $(G',P')$ is the result of simplification of $t$ in accordance with a record $\lambda$ and if $(G',P')$ admits a solution, then $(G,P)$ also admits a solution.
\end{observation}

\lv{
\begin{proof}
For the forward direction, consider a solution $S$ to $(G,P)$ which corresponds to $\lambda=(\delta, I, F, L)$. By comparing Definition~\ref{def:correspondence} with Definition~\ref{def:simplification}, we observe the following:
\begin{enumerate}
\item for each $s$-$t$ path $P\in S$ such that $s,t\not \in Y_t$ and $P\cap E_t\neq \emptyset$, it holds that each path segment of $P$ in $Y_t$ begins and ends with a pair of edges in $F$ and in particular is replaced by a single vertex in $(G',P')$;
\item for each $s$-$t$ path $P\in S$ such that $s,t\in Y_t$ and $P\cap E_t\neq \emptyset$, it holds that each path segment of $P$ outside of $Y_t$ begins and ends with a pair of edges in $I$ and in particular is replaced by a pair of new terminals in $(G',P')$;
\item for each $s$-$t$ path $P\in S$ such that $\{s,t\}\cap Y_t=\{s\}$, it holds that the path segment of $P$ in $Y_t$ containing $s$ ends with an edge in $L$ and is replaced by a new terminal in $(G',P')$, and all other path segments of $P$ in $Y_t$ begin and end with a pair of edges in $F$ and are hence replaced by single vertices in $(G',P')$.
\end{enumerate}
From the above, we observe that $S$ can be transformed into a solution $S'$ for $(G',P')$. The backward direction then follows by reversing the above observations; in particular, given a solution $S'$ for $(G',P')$, we use the fact that $\lambda$ is valid to expand $S'$ into a full solution $S$ to $(G,P)$.
\end{proof}
}

\sv{\smallskip
\noindent \textbf{The Dynamic Step.}\quad The following crucial lemma represents the tool that allows us to deal with the dynamic step of our leaf-to-root computation along the treecut decomposition.

\begin{lemma}[$\star$]
\label{lem:tcwdyn}
There is an algorithm which takes as input $(G,P)$ along with a width-$k$ treecut decomposition $(T,\mathcal{X})$ of $G$ and a non-leaf node $t\in V(T)$ and $D(t')$ for every child $t'$ of $t$, runs in time $(k|P|)^{\bigoh(k^2)}$, and outputs $D(t)$.
\end{lemma}

\begin{proof}[Sketch of Proof.]
We begin by looping through all of the at most $4^k\cdot k!$ distinct records for $t$; for each such record $\lambda$, our task is to decide whether it is valid, i.e., whether $(G^\lambda,P^\lambda)$ is a \textbf{YES}-instance. On an intuitive level, our aim will now be to use branching and simplification in order to reduce the question of checking whether $\lambda$ is valid to an instance of \textsc{Simple EDP}.

In our first layer of branching, we will select a record from the data tables of each node in $A_t$. Formally, we say that a \emph{record-set} is a mapping $\tau:t'\in A_t\mapsto \lambda_{t'}\in D(t')$. Note that the number of record-sets is upper-bounded by $(4^k\cdot k!)^{3(2k+1)}$, and we will loop over all possible record-sets.

Next, for each record-set $\tau$, we will apply simplification to each node $t'\in A_t$ in accordance with $\tau(t')$, and recall that each vertex $v$ created by this sequence of simplifications has degree at most $2$. We then apply a reduction rule to ensure that each such vertex is only adjacent to $(V(G)\setminus Y_t)\cup X_t$. At this point, every vertex contained in a bag $X_{t'}$ for $t'\in A_t$ has degree at most $2$ and is only adjacent to $X_t\cup(V(G)\setminus Y_t)$. Now, we apply one additional reduction rule which allows us to replace every thin node by vertices of degree at most $2$ adjacent to $X_t$ (there are only a few possible kinds of records that thin nodes can have, and we use simple replacement rules for each individual case). 

At this point, every vertex in $V(G^\lambda)\setminus X_t$ is of degree at most $2$ and only adjacent to $X_t$, and so $(G^\lambda,P^\lambda)$ is an instance of \textsc{Simple EDP}. All that is left is to invoke Lemma~\ref{lem:simplealg}; if it is a \textbf{YES}-instance then we add $\lambda$ to $D(t)$, and otherwise we do not.

We conclude the proof by arguing correctness. Assume $\lambda$ is a valid record. By Definition~\ref{def:valid}, this implies that $(G^\lambda,P^\lambda)$ admits a solution $S$. For each child $t'\in A_t$, $S$ corresponds to some record $\lambda^S_{t'}$ for $t$; consider now the branch in our algorithm which sets $\tau(t')=\lambda^S_{t'}$. Then by Observation~\ref{obs:simplification} it follows that each simplification carried out by the algorithm preserves the existence of a solution to $(G^\lambda,P^\lambda)$. Hence the instance of \textsc{Simple EDP} we obtain at the end of this branch must also be a \textbf{YES}-instance.
\end{proof}
}

\lv{\subsection{The Dynamic Step}
Let us begin by formalizing our aim for this subsection.

\begin{lemma}
\label{lem:tcwdyn}
There is an algorithm which takes as input $(G,P)$ along with a width-$k$ treecut decomposition $(T,\mathcal{X})$ of $G$ and a non-leaf node $t\in V(T)$ and $D(t')$ for every child $t'$ of $t$, runs in time $(k|P|)^{\bigoh(k^2)}$, and outputs $D(t)$.
\end{lemma}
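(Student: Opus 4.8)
The plan is to implement the dynamic step as a two-layer branching procedure that ultimately reduces each candidate record to an instance of \textsc{Simple EDP}, to which Lemma~\ref{lem:simplealg} can be applied. First I would loop over all of the at most $4^k\cdot k!$ candidate records $\lambda=(\delta,I,F,L)$ for $t$; for each one we must decide whether $(G^\lambda,P^\lambda)$ is a \textbf{YES}-instance of \textsc{EDP}. The high-level strategy is that the children of $t$ fall into two groups: the thin children $B_t$ (adhesion $\le 2$, neighbourhood inside $X_t$) and the remaining children $A_t$, and by Lemma~\ref{lem:Asmall} we have $|A_t|\le 2k+1$. The bold/irregular behaviour can only happen at the few children in $A_t$, so we can afford to guess, for each $t'\in A_t$, one record $\lambda_{t'}\in D(t')$; formally a \emph{record-set} is a map $\tau\colon A_t\to\bigcup_{t'\in A_t}D(t')$, and the number of these is bounded by $(4^k\cdot k!)^{2k+1}$, which is still of the form $k^{\bigoh(k^2)}$.

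Next, for a fixed record-set $\tau$, I would apply the simplification operation (Definition~\ref{def:simplification}) to each node $t'\in A_t$ in accordance with $\tau(t')$: this replaces each $G_{t'}$ by a constant-size gadget of vertices of degree at most $2$, while (by Observation~\ref{obs:simplification}) preserving the existence of a solution that corresponds to $\tau(t')$ in the relevant subinstance. A bookkeeping reduction rule is then needed to ensure these newly created low-degree vertices are only adjacent to $X_t\cup(V(G)\setminus Y_t)$ (contracting/rerouting edges that go into the interior of other subtrees, which is harmless since those vertices have degree $\le 2$). Then I would handle the thin children $B_t$ uniformly: since a thin node has adhesion $\le 2$ and its neighbourhood lies in $X_t$, there are only a bounded number of possible record types it can realize, and for each type one replaces the whole subtree $Y_b$ by a small fixed gadget of degree-$\le 2$ vertices attached to $X_t$; whether a given thin child admits a record of a given type is read off directly from $D(b)$. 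After both rounds of replacement, every vertex of $G^\lambda$ outside of $X_t$ has degree at most $2$ and is adjacent only to $X_t$, so $(G^\lambda,P^\lambda)$ — possibly after absorbing the constantly-many extra vertices introduced by $I$, $F$, $L$ into the vertex cover — is an instance of \textsc{Simple EDP} with $|A|=\bigoh(k)$. We then invoke Lemma~\ref{lem:simplealg}, which runs in time $\bigoh(|P|^{\binom{\bigoh(k)}{2}+1}(\bigoh(k))!)=(k|P|)^{\bigoh(k^2)}$; if any branch yields a \textbf{YES}-instance we put $\lambda$ into $D(t)$, otherwise we discard $\lambda$.

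For correctness I would argue both directions. If $\lambda$ is valid then $(G^\lambda,P^\lambda)$ has a solution $S$; for each $t'\in A_t$, $S$ induces (via Definition~\ref{def:correspondence}) a unique corresponding record $\lambda^S_{t'}\in D(t')$, and in the branch with $\tau(t')=\lambda^S_{t'}$ every simplification we perform preserves solvability by Observation~\ref{obs:simplification}, so the resulting \textsc{Simple EDP} instance is a \textbf{YES}-instance and the algorithm accepts. Conversely, if some branch produces a \textbf{YES}-instance of \textsc{Simple EDP}, then undoing the simplifications (again via Observation~\ref{obs:simplification}, now using that each guessed $\tau(t')$ is a \emph{valid} record for $t'$) and undoing the thin-child replacements expands that solution into a solution of $(G^\lambda,P^\lambda)$, witnessing that $\lambda$ is valid. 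The running time is the product of the number of records, the number of record-sets, and the cost of the \textsc{Simple EDP} call, all of which are $k^{\bigoh(k^2)}$ or $(k|P|)^{\bigoh(k^2)}$, giving the claimed bound.

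The main obstacle I expect is the thin-child replacement step: one has to verify that the information stored in $D(b)$ for a thin child $b$ is exactly what is needed to simulate $Y_b$ by a bounded gadget, i.e.\ that the at most two edges of $E_b$ interact with a solution in only finitely many qualitatively different ways (internal pair, foreign pair, one leaving terminal plus one foreign-endpoint, both leaving, unused, etc.), and that for each case there is a concrete degree-$\le 2$ gadget attached to $X_t$ realizing precisely those records. Getting this case analysis complete and consistent with Definitions~\ref{def:valid}–\ref{def:simplification}, and making sure the niceness condition on $(T,\mathcal{X})$ is genuinely used to guarantee $N(Y_b)\subseteq X_t$, is the delicate part; the rest is bookkeeping and plugging into Lemma~\ref{lem:simplealg}.
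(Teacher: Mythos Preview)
Your proposal is correct and follows essentially the same approach as the paper: loop over candidate records, branch over a record-set for the at most $2k+1$ children in $A_t$, simplify those children, replace thin children in $B_t$ by small degree-$\le 2$ gadgets read off from their data tables, and solve the resulting \textsc{Simple EDP} instance via Lemma~\ref{lem:simplealg}. The paper formalizes your ``bookkeeping reduction rule'' and the thin-child replacement as two explicit reduction rules (the latter being exactly the finite case analysis you anticipate as the delicate step), but the structure, correctness argument via Observation~\ref{obs:simplification}, and running-time analysis are the same.
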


Finally, we introduce two simple reduction rules which will later help us reduce our problem to \textsc{Simple EDP}. The first ensures that two vertices of degree at most $2$ are not adjacent to each other.

\begin{redrule}
\label{red:degtwored}
Let $(G,P)$ be an instance of \textsc{EDP} containing an edge $\{a,b\}$ between two vertices of degree at most $2$.
\begin{enumerate}
\item If $a$ is not a terminal, then contract $\{a,b\}$ and replace all occurrences of $b$ in $P$ by the new vertex;
\item If $\{a,b\}\in P$, then contract $\{a,b\}$ and replace all occurrences of $a$ and $b$ in $P$ by the new vertex;
\item If $\{a,b\}\not \in P$ and each of $a$ and $b$ occurs in precisely one element of $P$, then delete the edge $\{a,b\}$;
\item Otherwise, reject $(G,P)$.
\end{enumerate}
\end{redrule}

\begin{proof}[Proof of Safeness]
The safeness of the first three rules is straightforward. As for the fourth rule, let us consider the conditions for when it is applied. In particular, the fourth rule is only called if either $a$ or $b$ occurs in three terminal pairs, or if $a$ occurs in at least one terminal pair and $b$ in at least two but $\{a,b\}\not\in P$. Clearly, $(G,P)$ is a \textbf{NO}-instance in either of these cases.
\end{proof}

The second reduction rule will allow us to replace thin nodes with data tables by small representatives; these representatives will only contain vertices of degree at most $2$ adjacent to the original neighborhood of the thin node. The safeness of this rule follows directly from the definition of $D(t)$ (one simply needs to check each case separately) and hence we do not prove it.

\begin{redrule}
\label{red:thinred}
Let $t$ be a thin node with non-empty $D(t)$.
\begin{enumerate}
\item If $E_t=\{\{a,b\}\}$ where $a\in Y_t$ and if
\begin{itemize}
\item $((\{a,b\}\mapsto L'),\emptyset,\emptyset,\{s,\{a,b\}\})\in D(t)$, then delete $Y_t\setminus \{s\}$ and create the edge $\{s,b\}$; 
\item otherwise, $((\{a,b\}\mapsto U'),\emptyset,\emptyset,\emptyset)\in D(t)$ and we delete $Y_t$.
\end{itemize}

\item If $E_t=\{\{a,b\},\{c,d\}\}$ where $a,c\in Y_t$, $U_t=\emptyset$ and if
\begin{itemize}
\item $((\{a,b\},\{c,d\}\mapsto F'),\emptyset,\{\{a,b\},\{c,d\}\},\emptyset)\in D(t)$, then delete $Y_t$ and create a new vertex $v$ adjacent to $b$ and $d$; else, if
\item $((\{a,b\},\{c,d\}\mapsto U'),\emptyset,\emptyset,\emptyset)\in D(t)$, then delete $Y_t$;
\item otherwise, $((\{a,b\},\{c,d\}\mapsto I'),\{\{a,b\},\{c,d\}\},\emptyset,\emptyset)\in D(t)$ and we delete $Y_t\setminus \{a,c\}$ and add $\{a,c\}$ into $P$.
\end{itemize}

\item If $E_t=\{\{a,b\},\{c,d\}\}$ where $a,c\in Y_t$, $U_t=\{s\}$ and if
\begin{itemize}
\item $((\{a,b\}\mapsto L', \{c,d\}\mapsto U'),\emptyset,\emptyset,\{s,\{a,b\}\})\in D(t)$ and also $((\{c,d\}\mapsto L', \{a,b\}\mapsto U'),\emptyset,\emptyset,\{s,\{c,d\}\})\in D(t)$, then delete $Y_t\setminus \{s\}$ and make $s$ adjacent to $b$ and $d$;
\item otherwise, $((\{a,b\}\mapsto L', \{c,d\}\mapsto U'),\emptyset,\emptyset,\{s,\{a,b\}\})\in D(t)$ and we delete $Y_t\setminus \{s\}$ and make $s$ adjacent to $b$.
\end{itemize}

\item If $E_t=\{\{a,b\},\{c,d\}\}$ where $a,c\in Y_t$, $U_t=\{s_1,s_2\}$ (not necessarily $s_1\neq s_2$) and if
\begin{itemize}
\item $((\{a,b\},\{c,d\}\mapsto L'),\emptyset,\emptyset,\{\{s_1,\{a,b\}\},\{s_2,\{c,d\}\}\})\in D(t)$ and also $((\{a,b\},\{c,d\}\mapsto L'),\emptyset,\emptyset,\{\{s_2,\{a,b\}\},\{s_1,\{c,d\}\}\})\in D(t)$, then add a new vertex $s'$ adjacent to $b$ and $d$, replace all occurrences of $s_1$ and $s_2$ in $P$ by $s'$, and delete $Y_t$;
\item otherwise, $((\{a,b\},\{c,d\}\mapsto L'),\emptyset,\emptyset,\{\{s_1,\{a,b\}\},\{s_2,\{c,d\}\}\})\in D(t)$ and we delete $Y_t\setminus \{s_1,s_2\}$, and make $s_1$ adjacent to $b$ and $s_2$ adjacent to $d$.
\end{itemize}

\item Otherwise, $(G,P)$ is a \textbf{NO}-instance.
\end{enumerate}
\end{redrule}

With Lemma~\ref{lem:simplealg} and Reduction Rules~\ref{red:degtwored},~\ref{red:thinred} in hand, we have all we need to handle the dynamic step. It will be useful to recall the definitions of $A_t$ and $B_t$, and that $|A_t|\leq 2k+1$. 

\begin{proof}[Proof of Lemma~\ref{lem:tcwdyn}]
We begin by looping through all of the at most $4^k\cdot k!$ distinct records for $t$; for each such record $\lambda$, our task is to decide whether it is valid, i.e., whether $(G^\lambda,P^\lambda)$ is a \textbf{YES}-instance. On an intuitive level, our aim will now be to use branching and simplification in order to reduce the question of checking whether $\lambda$ is valid to an instance of \textsc{Simple EDP}.

In our first layer of branching, we will select a record from the data tables of each node in $A_t$. Formally, we say that a \emph{record-set} is a mapping $\tau:t'\in A_t\mapsto \lambda_{t'}\in D(t')$. Note that the number of record-sets is upper-bounded by $(4^k\cdot k!)^{3(2k+1)}$, and we will loop over all possible record-sets.

Next, for each record-set $\tau$, we will apply simplification to each node $t'\in A_t$ in accordance with $\tau(t')$, and recall that each vertex $v$ created by this sequence of simplifications has degree at most $2$. Next, we exhaustively apply Reduction Rule~\ref{red:degtwored} to ensure that each such $v$ is only adjacent to $(V(G)\setminus Y_t)\cup X_t$. At this point, every vertex contained in a bag $X_{t'}$ for $t'\in A_t$ has degree at most $2$ and is only adjacent to $X_t\cup(V(G)\setminus Y_t)$. 

Finally, we apply Reduction Rule~\ref{red:thinred} to replace each thin node by vertices of degree at most $2$ adjacent to $X_t$. At this point, every vertex in $V(G^\lambda)\setminus X_t$ is of degree at most $2$ and only adjacent to $X_t$, and so $(G^\lambda,P^\lambda)$ is an instance of \textsc{Simple EDP}. All that is left is to invoke Lemma~\ref{lem:simplealg}; if it is a \textbf{YES}-instance then we add $\lambda$ to $D(t)$, and otherwise we do not.

The running time is upper bounded by the branching factor $(4^k\cdot k!)^{3(2k+1)}$ times the time to apply our two reduction rules and the time required to solve the resulting \textsc{Simple EDP instance}. All in all, we obtain a running time of at most $k^{\bigoh(k^2)}\cdot |P|^{\bigoh(k^2)}=(k|P|)^{\bigoh(k^2)}$.

We conclude the proof by arguing correctness. Assume $\lambda$ is a valid record. By Definition~\ref{def:valid}, this implies that $(G^\lambda,P^\lambda)$ admits a solution $S$. For each child $t'\in A_t$, $S$ corresponds to some record $\lambda^S_{t'}$ for $t$; consider now the branch in our algorithm which sets $\tau(t')=\lambda^S_{t'}$. Then by Observation~\ref{obs:simplification} it follows that each simplification carried out by the algorithm preserves the existence of a solution to $(G^\lambda,P^\lambda)$. Since both our reduction rules are safe, the instance of \textsc{Simple EDP} we obtain at the end of this branch must also be a \textbf{YES}-instance.

On the other hand, assume the algorithm adds a record $\lambda$ into $D_t$. This means that the resulting \textsc{Simple EDP} instance was a \textbf{YES}-instance. Then by the safeness of our reduction rules and by the second part of Observation~\ref{obs:simplification}, the instance $(G', P')$ obtained by reversing the reduction rules and simplifications was also a \textbf{YES}-instance; in particular $(G^\lambda,P^\lambda)$ is a \textbf{YES}-instance and so $\lambda$ is a valid record.
\end{proof}
}

\lv{
We now have all the ingredients we need to prove our main result.
}

\begin{theorem}
\label{thm:edptcw}
\textsc{EDP} can be solved in time at most $\bigoh(n^3)+k^{\bigoh(k^2)} n^2+ (k|P|)^{\bigoh(k^2)}n$, where $k$ is the treecut width of the input graph and $n$ is the number of its vertices.
\end{theorem}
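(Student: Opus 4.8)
### Proof Proposal for Theorem \ref{thm:edptcw}

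The plan is to assemble the theorem from the machinery already developed: the approximation algorithm for treecut decompositions (Theorem \ref{thm:computetcdec}), the transformation to a nice decomposition (Lemma \ref{lem:nicetcdec}), the leaf computation (Lemma \ref{lem:tcwleaves}), and the dynamic step (Lemma \ref{lem:tcwdyn}). First I would run the algorithm of Theorem \ref{thm:computetcdec} with parameter $k$; in time $2^{\bigoh(k^2\log k)}n^2$ this either reports $\tcw(G)>k$ or returns a treecut decomposition of width at most $2k$. Since we are stating a running time parameterized by the treecut width $k$ of the input, we may assume the latter occurs and work with a decomposition $(T,\mathcal{X})$ of width $w\le 2k$. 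Next I would apply Lemma \ref{lem:nicetcdec} to convert $(T,\mathcal{X})$ into a nice treecut decomposition of the same width and with at most the same number of nodes; this costs $\bigoh(n^3)$ time. We may also assume $X_r=\emptyset$ as remarked at the end of Section \ref{sub:tcw}.

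Then I would run the dynamic programming from the leaves of $T$ towards the root $r$. For each leaf $t$, Lemma \ref{lem:tcwleaves} computes $D(t)$ in time $w^{\bigoh(w^2)}=k^{\bigoh(k^2)}$. For each internal node $t$, once $D(t')$ is available for all children $t'$ of $t$, Lemma \ref{lem:tcwdyn} computes $D(t)$ in time $(w|P|)^{\bigoh(w^2)}=(k|P|)^{\bigoh(k^2)}$. Summing over all $\bigoh(n)$ nodes of $T$, the leaf computations contribute $k^{\bigoh(k^2)}n$ and the internal computations contribute $(k|P|)^{\bigoh(k^2)}n$; absorbing the former into the latter (or keeping it separate as in the statement) and adding the preprocessing costs yields the claimed bound $\bigoh(n^3)+k^{\bigoh(k^2)}n^2+(k|P|)^{\bigoh(k^2)}n$. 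Finally, I would invoke the observation made just before Lemma \ref{lem:tcwleaves}: for the root $r$ with $X_r=\emptyset$, $E_r=\emptyset$, so the only candidate record is $(\emptyset,\emptyset,\emptyset,\emptyset)$, and $(G,P)$ is a \textbf{YES}-instance if and only if $D(r)=\{(\emptyset,\emptyset,\emptyset,\emptyset)\}$ rather than $D(r)=\emptyset$. Hence inspecting $D(r)$ decides the instance.

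The conceptual content is really all packed into Lemmas \ref{lem:tcwleaves} and \ref{lem:tcwdyn}, and in turn into Lemma \ref{lem:simplealg} for \textsc{Simple EDP}; the proof of the theorem itself is then just bookkeeping. The one point deserving a sentence of care is \emph{correctness of the leaf-to-root sweep}: I would argue inductively that $D(t)$ as computed equals the set of valid records for $t$, using that Lemma \ref{lem:tcwleaves} gives this for leaves and Lemma \ref{lem:tcwdyn} preserves it at internal nodes (its correctness argument relies on correspondence of a solution to a record, Definition \ref{def:correspondence}, and on Observation \ref{obs:simplification}). The mild subtlety to flag is that the width doubling from the $2$-approximation only changes $k$ to $2k$ inside the $k^{\bigoh(k^2)}$ and $(k|P|)^{\bigoh(k^2)}$ factors, which is harmless since $(2k)^{\bigoh((2k)^2)}=k^{\bigoh(k^2)}$; I would note this explicitly so the stated bound in terms of the true $\tcw$ is justified. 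No single step here is a genuine obstacle — the main obstacle of the whole section, solving the dynamic step via \textsc{Simple EDP}, has already been overcome in Lemmas \ref{lem:simplealg} and \ref{lem:tcwdyn}.
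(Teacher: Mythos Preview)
Your proposal is correct and follows essentially the same approach as the paper: compute a $2$-approximate treecut decomposition via Theorem~\ref{thm:computetcdec}, make it nice, then run the leaf-to-root dynamic programming using Lemmas~\ref{lem:tcwleaves} and~\ref{lem:tcwdyn}, and decide by inspecting $D(r)$. Your write-up is in fact more detailed than the paper's own proof, which is a terse three-sentence assembly of the same ingredients; your explicit remarks on the width-doubling being absorbed into the $k^{\bigoh(k^2)}$ factors and on the inductive correctness of $D(t)$ are nice additions the paper leaves implicit.
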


\begin{proof}
We begin by invoking Theorem~\ref{thm:computetcdec} to compute a treecut decomposition of $G$ of width at most $2k$ and then converting it into a nice treecut decomposition (this takes time $k^{\bigoh(k^2)} n^2$ and $\bigoh(n^3)$, respectively). Afterwards, we use Lemma~\ref{lem:tcwleaves} to compute $D(t)$ for each leaf of $T$, followed by a recursive leaf-to-root application of Lemma~\ref{lem:tcwdyn}. Once we compute $D(r)$ for the root $r$ of $T$, we output \textbf{YES} if and only if $D(r)=\{(\emptyset, \emptyset, \emptyset, \emptyset)\}$.
\end{proof}

\section{A linear kernel for \textsc{EDP}}
\label{sec:fes}
The goal of this section is to provide a fixed-parameter algorithm for \textsc{EDP} which exploits the structure of the input graph exclusively. While treecut width cannot be used to obtain such an algorithm, here we show that the feedback edge set number can. More specifically, we obtain a linear kernel for \textsc{EDP} parameterized by the feedback edge set number. Our kernel relies on the following two facts:

\begin{ourfact}
\label{fact:fescomp}
A minimum feedback edge set of a graph $G$ can be obtained by deleting the edges of minimum spanning trees of all connected components of $G$, and hence can be computed in time $\bigoh(|E(G)|\cdot \log |V(G)|)$.
\end{ourfact}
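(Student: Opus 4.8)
## Proof Proposal for Fact~\ref{fact:fescomp}

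The plan is to establish two things: a structural characterization of minimum feedback edge sets in terms of spanning forests, and then the running-time bound as a direct consequence of a standard minimum spanning tree computation. First I would recall that an edge set $X$ is a feedback edge set if and only if $G-X$ is a forest; since deleting edges never disconnects us into more pieces than necessary when we only remove cycle edges, the key observation is that removing a feedback edge set $X$ from a connected graph $G$ leaves a graph $G-X$ with the same vertex set, which is a forest, and moreover we may assume it is a spanning tree of $G$ (if $G-X$ were a forest that is not connected, we could return some edge of $X$ to reconnect two components while keeping it acyclic, strictly shrinking $X$; hence a \emph{minimum} feedback edge set leaves exactly a spanning tree). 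For a connected graph on $n$ vertices and $m$ edges, every spanning tree has exactly $n-1$ edges, so every minimum feedback edge set has size exactly $m-(n-1) = m-n+1$, and conversely the complement of any spanning tree is a feedback edge set of this minimum size. The same argument applied componentwise handles graphs that are not connected: a minimum feedback edge set of $G$ is the union, over the connected components $G_1,\dots,G_c$ of $G$, of the complements of spanning trees of each $G_i$; equivalently, it is $E(G)$ minus the edges of a spanning forest of $G$.

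Second I would observe that these spanning trees can be taken to be \emph{minimum} spanning trees (in fact, since we only care about the edge set of the forest and not its weight, \emph{any} spanning forest works, but an MST algorithm is a convenient concrete procedure that produces one). Computing a minimum spanning forest of $G$ — for instance via Kruskal's algorithm with a union-find data structure, or Prim's algorithm with a binary heap — runs in time $\bigoh(|E(G)| \cdot \log |V(G)|)$ on an $m$-edge, $n$-vertex graph. Once the spanning forest $F$ is in hand, the minimum feedback edge set is simply $E(G) \setminus E(F)$, which is obtained in linear time in $|E(G)|$. This yields the claimed bound of $\bigoh(|E(G)| \cdot \log |V(G)|)$.

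The only point requiring a moment's care — and the closest thing to an obstacle — is the claim that a minimum feedback edge set leaves behind a \emph{spanning tree} (per component), rather than merely an acyclic subgraph: one must rule out the possibility that a strictly smaller feedback edge set exists whose removal disconnects a component. This is handled by the exchange argument sketched above: if $G-X$ is a forest that has strictly more than $c$ connected components (where $c$ is the number of components of $G$), then some edge $e \in X$ has its two endpoints in different components of $G-X$, so $(G-X)+e$ is still a forest and $X \setminus \{e\}$ is a smaller feedback edge set, contradicting minimality. Everything else is bookkeeping: the size formula $\sum_i (|E(G_i)| - |V(G_i)| + 1)$ and the invocation of a textbook MST routine.
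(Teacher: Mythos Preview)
Your argument is correct and complete: the exchange argument showing that a minimum feedback edge set leaves a spanning forest (one tree per component), the size formula, and the invocation of Kruskal's or Prim's algorithm are exactly the standard way to justify this claim. Note, however, that the paper does not actually supply a proof of Fact~\ref{fact:fescomp}; it is stated as a well-known fact without any accompanying argument, so there is nothing to compare your approach against beyond observing that you have written out the textbook justification that the authors left implicit.
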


\begin{ourfact}[\cite{GargVY97}]
\label{fact:festrees}
\EDP\ can be solved in polynomial time when $G$ is a forest.
\end{ourfact}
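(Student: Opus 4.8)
The plan is to exploit the fact that routes in a forest are \emph{forced}. Given a forest $G$ and a set $P$ of terminal pairs, first discard as a trivial \textbf{NO}-instance the case where some pair $\{s,t\}\in P$ has $s$ and $t$ in different connected components of $G$, since then no $s$--$t$ path exists at all. Otherwise, for every $\{s,t\}\in P$ there is a \emph{unique} path $\pi_{st}$ between $s$ and $t$ inside the tree of $G$ containing both. Consequently any solution is obliged to use exactly $\pi_{st}$ to connect $\{s,t\}$, and hence $(G,P)$ is a \textbf{YES}-instance if and only if the family $\SB\pi_{st}\SM\{s,t\}\in P\SE$ is pairwise edge-disjoint, i.e.\ if and only if no edge of $G$ lies on more than one of these paths. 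Thus the whole problem collapses to a disjointness test.

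The remaining work is purely algorithmic and routine. Root each connected component (tree) of $G$ arbitrarily; then $\pi_{st}$ is the concatenation of the root-paths of $s$ and $t$ up to their lowest common ancestor, so after an $\bigoh(n)$-time LCA preprocessing one can, for every edge $e=uv$ with $v$ the child endpoint, compute in total linear time the number $c(e)$ of pairs whose path uses $e$: for each pair $\{s,t\}$ charge $+1$ at $s$, $+1$ at $t$, and $-2$ at $\mathrm{lca}(s,t)$, and let $c(e)$ be the subtree-sum of these charges below $v$. The instance is a \textbf{YES}-instance precisely when $c(e)\le 1$ for every edge $e$. This runs in time $\bigoh(n+|P|)$ after near-linear preprocessing, or in a trivial $\bigoh(n\cdot|P|)$ by the naive path-by-path method; either way it is polynomial.

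I do not expect any genuine obstacle here: the statement is essentially folklore, and the only points deserving a moment's care are that a forest (even when multi-edges are nominally permitted) is simple, so tree-paths are well-defined and unique, and that the disjointness test should be realized by a single tree traversal rather than by comparing all pairs of paths. For completeness one cites \cite{GargVY97}, which in fact establishes the substantially stronger result that the \emph{optimization} version (route a maximum number of pairs, with edge capacities) is polynomial-time solvable on trees; only the easy all-or-nothing version above is needed for the kernelization.
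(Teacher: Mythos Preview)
Your argument is correct: in a forest every $s$--$t$ path is unique, so the decision version of \EDP\ reduces to checking whether the (forced) family of paths is pairwise edge-disjoint, and your LCA/subtree-sum trick computes the edge multiplicities in linear time. There is nothing to object to.

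As for the comparison: the paper does not prove this statement at all. It is recorded as a \emph{Fact} with a bare citation to~\cite{GargVY97} and used as a black box inside Reduction Rule~\ref{red:prune}. So your write-up is strictly more than what the paper provides. The cited reference actually shows the harder optimization version (maximum integral multicommodity flow on trees with capacities) is in \P, whereas you give the direct elementary argument for the feasibility version that is all the kernelization needs. Either route is fine here; yours has the advantage of being self-contained and of making clear that only the trivial uniqueness-of-paths observation is required, not the full machinery of~\cite{GargVY97}.
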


For the purposes of this section, it will be useful to assume that each vertex $v\in V(G)$ occurs in at
 most one terminal pair, each vertex in a terminal pair has degree $1$
 in $G$, and each terminal pair is not adjacent to each other.
 Note that for any instance without these properties, we can add a new leaf
 vertex for each terminal, attach it to the original terminal, and replace
 the original terminal in $P$ with the leaf vertex~\cite{ZhouTN00,GanianOrdyniakSridharan17}.

Consider an instance $(G,P)$ of \textsc{EDP} and let $X\subseteq E(G)$ be a minimum feedback edge set $X$. Let $Y$ be the set of all vertices incident to at least one edge from $X$, and let $Q=G-X$. Similarly as before, given a subgraph $H$ of $G$, we say that $H$ contains an \emph{unmatched} terminal $s$ if $\{s,t\}\in P$, $s\in V(H)$ and $t\not \in V(H)$.
We begin with two simple reduction rules which allow us to remove degree $2$ vertices and leaves not containing a terminal.

\lv{\begin{redrule}}
\sv{\begin{redrule}[$\star$]}
\label{red:leaves}
Let $v\in V(G)$ be such that $|N_G(v)|=1$. If $v$ is not a terminal, then delete $v$ from $G$.
\end{redrule}

\lv{
\begin{proof}[Proof of Safeness.]
Since $v$ has degree $1$ and is not a terminal, there cannot exist a solution to $(G,P)$ containing a path which uses $va$.
\end{proof}
}

\lv{\begin{redrule}}
\sv{\begin{redrule}[$\star$]}
\label{red:degtwo}
Let $v,a,b\in V(G)$ be such that $N_G(v)=\{a,b\}$ and $\{a,b\}\not \in E$. Then delete $v$ and add the edge $ab$ into $E$. Furthermore, if $\{a,v\}$ or $\{v,b\}$ were in $X$ then add $\{a,b\}$ in $X$.
\end{redrule}

\lv{
\begin{proof}[Proof of Safeness.]
Observe that any solution to the original instance which uses any edge incident to $v$ must contain a path which traverses through both $av$ and $vb$, and after the reduction rule is applied one can simply replace these two edges in that path by $ab$. Any solution in the reduced instance can be similarly transformed into a solution to the original instance. The same argument also shows that the newly constructed set $X$ is also a feedback edge set in the reduced graph.
\end{proof}
}

Of crucial importance is our third rule, which allows us to prune the instance of subtrees with a single edge to $Y$.
For a subgraph $H$ of $G$, recall that $P_H\subseteq P$ denotes the subset of terminal pairs whose both endpoints lie in $H$.

\lv{\begin{redrule}}
\sv{\begin{redrule}[$\star$]}
\label{red:prune}
Let $L$ be a connected component of $G-Y$ such that there exists a single edge $\{\ell\in L, y\in Y\}$ between $L$ and $Y$.
\begin{enumerate}
\item[a.] If $L$ contains no unmatched terminal and $(L,P_L)$ is a \textbf{YES}-instance of \textsc{EDP}, then set $P:=P\setminus P_L$ and $G:=G\setminus V(L)$.
\item[b.] If $L$ contains precisely one unmatched terminal $s$ where $\{s,t\}\in P$ and the instance $(L,P_L\cup\{s,\ell\})$ is a \textbf{YES}-instance of \textsc{EDP}, then set $P:=P\setminus P_L$ and $G:=((V(G)\setminus V(L))\cup \{s\},(E(G)\setminus (E(L)\cup \{\{\ell, y\}\})\cup\{y,s\})$.
\item[c.] In all other cases, $(G,P)$ is a \textbf{NO}-instance of \textsc{EDP}.
\end{enumerate}
\end{redrule}

\lv{
\begin{proof}[Proof of Safeness.]
First, we note that a solution can only use the edge $\{\ell,y\}$ (which is the only edge connecting $L$ to $G-L$) for a path connecting an unmatched terminal in $L$. Let us start by arguing the correctness of Point c., which covers about the following three cases. If $L$ contains at least two distinct unmatched terminals, then any solution to $(G,P)$ would require two edge disjoint paths between $L$ and $G-L$ (which, however, do not exist in $G$). If $L$ contains one unmatched terminal and $(L,P_L\cup\{s,\ell\})$ is a \textbf{NO}-instance of \textsc{EDP}, then it is not possible to find edge disjoint paths which connect terminal pairs in $P_L$ while also connecting $s$ to $t$ (since every path to $t$ must go through $\ell$). Similarly, if $L$ contains zero unmatched terminals and $(L,P_L)$ is a \textbf{NO}-instance of \textsc{EDP}, then $(G,P)$ must also be a \textbf{NO}-instance of \textsc{EDP}.

Next, assume that the conditions of Point a. hold. Since $L$ contains no unmatched terminal, the edge $\{\ell,y\}$ can never be used by any solution and hence it can be removed. Naturally, this results in $L$ being disconnected from $G-L$ and so it suffices to solve $G-L$.

Finally, in the case covered by Point b., every solution to $(G,P)$ must use the edge $\{\ell,y\}$ for an edge disjoint path connecting $s$ to $t$. Hence any solution to the reduced instance in this case implies a solution to $(G,P)$, and similarly any solution to $(G,P)$ can be used to obtain a solution for the reduced instance.
\end{proof}
}

After exhaustive application of Reduction Rules~\ref{red:leaves},~\ref{red:degtwo} and~\ref{red:prune} we observe that each leaf in $Q$ is either in $Y$ or adjacent to a vertex in $Y$. The simple rule below is required to obtain a bound on the number of leaves in $Q$ in the subsequent step.

\lv{\begin{redrule}}
\sv{\begin{redrule}[$\star$]}
\label{red:simplecon}
If $\{a,b\}\in P$ and $a,b$ are leaves in $G$ such that $N(a)=N(b)$, then remove $a$ and $b$ from $G$ and $P$.
\end{redrule}

\lv{
\begin{proof}[Proof of Safeness.]
Every $a$-$b$ path must use the two unique edges incident to $a$ and $b$, and we may assume without loss of generality that a path never visits a vertex twice.
\end{proof}
}

After exhaustive application of Reduction Rules~\ref{red:leaves},~\ref{red:degtwo},~\ref{red:prune} and~\ref{red:simplecon}, we can prove:

\lv{\begin{lemma}}
\sv{\begin{lemma}[$\star$]}
\label{lem:prune}
If $Q$ contains more than $4|X|$ leaves, then $(G,P)$ is a \textbf{NO}-instance.
\end{lemma}

\lv{
\begin{proof}
Let $Z$ denote the set of leaves in $Q$, and let $Y'=Z\cap Y$. Since $|Y|\leq 2|X|$ it follows that $|Z|>|Y|+2|X|$ and hence $|Z\setminus Y'|>|Y\setminus Y'|+2|X|$. For brevity, let $A=Z\setminus Y'$ be the set of leaves excluding the endpoints of our feedback edge set and $B=Y\setminus Y'$ be the set of endpoints of our feedback edge set which are not leaves. 

Recall that every vertex in $A$ is adjacent to a vertex in $B$. Let us define a weight function $w(b)$ for each vertex $b\in B$ which is equal to the number of edges in $X$ incident to $b$; note that since each edge in $X$ contributes by adding at most $2$ to the weight function, we have $\sum_{b\in B}w(b)\leq 2|X|$. Since $|A|>|B|+2|X|\geq |B|+\sum_{b\in B}w(b)=\sum_{b\in B}(w(b)+1)$, there must exist at least one vertex $c\in B$ such that $w(c)+1$ is smaller than the number of its neighbors in $A$; in other words, $c$ is adjacent to at least $w(c)+2$ leaves but is only incident to $w(c)+1$ edges whose endpoints are not leaves. Since $c$ itself is not a leaf and hence not a terminal, every leaf contains a terminal, and Reduction Rule~\ref{red:simplecon} cannot be applied, we conclude that is it not possible to route all terminals located in the leaves adjacent to $c$ to their endpoints via edge disjoint paths.
\end{proof}
}

Finally, we put everything together in the proof of the desired theorem.

\begin{theorem}
\textsc{EDP} admits a linear kernel parameterized by the feedback edge set number of the input graph.
\end{theorem}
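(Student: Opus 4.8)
The strategy is to show that after exhaustively applying Reduction Rules~\ref{red:leaves}--\ref{red:simplecon}, the instance $(G,P)$ either has been rejected or has size bounded linearly in $|X|$, where $X$ is a minimum feedback edge set. Each reduction rule runs in polynomial time: Rules~\ref{red:leaves} and~\ref{red:degtwo} are trivially polynomial, Rule~\ref{red:prune} requires solving \EDP\ on a tree $L$ (polynomial by Fact~\ref{fact:festrees}), and Rule~\ref{red:simplecon} is trivial; moreover each application strictly decreases $|V(G)|+|E(G)|$, so the whole reduction phase terminates in polynomial time. By Fact~\ref{fact:fescomp} we can compute $X$ (and hence $Y$ and $Q=G-X$) in polynomial time, so it only remains to bound the reduced instance.

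\textbf{Bounding the reduced instance.} Recall $|Y|\le 2|X|$ since each edge in $X$ contributes at most two endpoints. By Lemma~\ref{lem:prune}, if $Q$ has more than $4|X|$ leaves we answer \textbf{NO}; so assume $Q$ has at most $4|X|$ leaves. Since $Q$ is a forest, the number of vertices of degree at least $3$ in $Q$ is at most the number of leaves, hence at most $4|X|$. Now consider the vertices of degree exactly $2$ in $Q$: after Rule~\ref{red:degtwo} is exhausted, any vertex $v$ of degree $2$ in $G$ has its two neighbors joined by an edge (a triangle), which forces an edge of a cycle, hence an edge of $X$ incident near $v$ — more carefully, a maximal path of degree-$2$ vertices in $Q$ can only survive if it is ``anchored'' by an incident edge of $X$ at each internal vertex, and since a degree-$2$ vertex of $Q$ that is untouched by $X$ and whose neighbors are non-adjacent would be removed by Rule~\ref{red:degtwo}, every surviving degree-$2$ vertex of $Q$ is incident to an edge of $X$ or lies on a length-$1$ chord; either way the number of such vertices is $O(|X|)$. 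Combining, $|V(Q)| = |V(G)| \in O(|X|)$, and then $|E(G)| \le |E(Q)| + |X| \le |V(Q)| + |X| \in O(|X|)$ since $Q$ is a forest. Finally, each terminal is a degree-$1$ vertex of $G$ (by the normalization at the start of the section), so $|P| \le |V(G)| \in O(|X|)$; thus the total size of $(G,P)$ is $O(|X|)$, i.e.\ linear in the parameter.

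\textbf{Putting it together.} The kernelization algorithm is: normalize the terminals as described; compute $X$ via Fact~\ref{fact:fescomp}; exhaustively apply Reduction Rules~\ref{red:leaves}--\ref{red:simplecon}, answering \textbf{NO} whenever one of the rejecting cases triggers or whenever the condition of Lemma~\ref{lem:prune} is met; output the resulting instance. Correctness of the equivalence follows from the safeness proofs of the individual rules and from Lemma~\ref{lem:prune}, and the size bound is the argument above. This yields a linear kernel for \textsc{EDP} parameterized by $\fes(G)$.

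\textbf{Main obstacle.} The delicate point is the bound on degree-$2$ vertices of $Q$: one must argue that long induced paths in the forest $Q$ cannot survive the reduction, which relies on the interplay between Rule~\ref{red:degtwo} (which contracts degree-$2$ vertices with non-adjacent neighbors) and the fact that the only edges whose removal could leave a degree-$2$ vertex ``protected'' are those of $X$ — so the count of surviving degree-$2$ vertices is charged to $X$. The remaining steps (leaf count via Lemma~\ref{lem:prune}, branch-vertex count in a forest, terminal count) are routine.
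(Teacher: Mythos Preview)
Your proposal is correct and follows essentially the same approach as the paper: compute $X$, exhaustively apply the four reduction rules, invoke Lemma~\ref{lem:prune} to bound the leaves of $Q$, bound branching vertices by the leaf count, and bound degree-$2$ vertices via Rule~\ref{red:degtwo}. The only point worth tightening is the ``chord'' case of your degree-$2$ count: you should state explicitly that each edge $\{a,b\}\in X$ can serve as the chord for at most one degree-$2$ vertex $v\notin Y$ (two such vertices $v_1,v_2$ would yield the $4$-cycle $v_1 a v_2 b$ in the forest $Q$), which is what turns ``charged to $X$'' into an actual $O(|X|)$ bound---the paper asserts this step with the same brevity.
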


\begin{proof}
Let $(G,P)$ be an instance of \textsc{EDP}; w.l.o.g.\ we assume that $G$ is and remains connected (note that if $G$ becomes disconnected due to a later application of a reduction rule, one can simply kernelize each connected component separately). We begin by computing a minimum feedback edge set $X$ of $G$ using Fact~\ref{fact:fescomp}. We then exhaustively apply Reduction Rules ~\ref{red:leaves},~\ref{red:degtwo},~\ref{red:prune} and~\ref{red:simplecon}; since \textsc{EDP} is polynomial-time tractable by Fact~\ref{fact:festrees}, the time required to apply each rule is easily seen to be polynomial. 

After no more rules can be applied, we compare the number of leaves in $Q=G-X$ to $|X|$. If $Q$ contains more than $4|X|$ leaves, then we reject in view of Lemma~\ref{lem:prune}. On the other hand, if $Q$ contains at most $4|X|$ leaves, then we claim that $Q$ contains at most $11|X|-2$ vertices. Indeed, the number of vertices of degree at least $3$ in a forest is at most equal to the number of leaves minus two and in particular $Q$ has at most $4|X|-2$ vertices of degree at least $3$. Moreover, due to the exhaustive application of Reduction Rule~\ref{red:degtwo} it follows that the number of degree two vertices is at most $|X|$. And so, by putting together the bounds on $|Y|$ along with the number of vertices of degree $1$ and $2$ and $3$, we obtain $|V(G)|=|V(Q)|\leq 2|X|+4|X|+|X|+4|X|-2$, as claimed.
\end{proof}

%
%

\bibliographystyle{plainurl}
\bibliography{literature}

\end{document}